\documentclass [journal,onecolumn,11pt]{IEEEtran}
\usepackage{amsfonts,amsmath,amssymb}
\usepackage{indentfirst, setspace}
\usepackage{url,float}
\usepackage{color}
\usepackage[a4paper,ignoreall]{geometry}
\geometry{left=2cm, right=2cm, top=4cm, bottom=4cm}
\usepackage{longtable}
\setcounter{tocdepth}{3}
\usepackage{graphicx}
\usepackage[a4paper,ignoreall]{geometry}
\usepackage{multicol}
\usepackage{stfloats}
\usepackage{enumerate}
\usepackage{cite}
\usepackage{amsthm}
\usepackage{mathrsfs}
\usepackage{multirow}
\usepackage{amssymb}
\usepackage[square, comma, sort&compress, numbers]{natbib}
\usepackage[overload]{empheq}
\usepackage{listings}
\usepackage{booktabs}

\usepackage[justification=centering]{caption}

\def\qu#1 {\fbox {\footnote {\ }}\ \footnotetext { From Qu: {\color{red}#1}}}
\def\hqu#1 {}
\def\kq#1 {\fbox {\footnote {\ }}\ \footnotetext { From KangQuan: {\color{blue}#1}}}
\def\hkq#1 {}

\usepackage{amsmath}

\allowdisplaybreaks[4]


\newtheorem{Th}{Theorem}[section]

\newtheorem{Prop}[Th]{Proposition}

\newtheorem{Lemma}[Th]{Lemma}

\newtheorem{Rem}[Th]{Remark}

\newcommand{\tr}{{\rm Tr}}

\newcommand{\gf}{{\mathbb F}}


\makeatletter
\newcommand{\figcaption}{\def\@captype{figure}\caption}
\newcommand{\tabcaption}{\def\@captype{table}\caption}
\makeatother

\begin{document}
	\title{More infinite classes of APN-like \\ Power Functions}
	\author{{ Longjiang Qu,  Kangquan Li}
	\thanks{ Longjiang Qu, Kangquan Li are with the College of Science,
		National University of Defense Technology, Changsha, 410073, China and Hunan Engineering Research Center of Commercial Cryptography Theory and Technology Innovation, Changsha 410073, China.
		E-mail:  ljqu\_happy@hotmail.com, likangquan11@nudt.edu.cn.
		This work is supported by the National Natural Science Foundation of China (NSFC) under Grant 62202476, 62032009 and  62172427,  and the Research Fund of National University of Defense Technology under Grant ZK22-14. 
		 {\emph{(Corresponding author: Kangquan Li)}}}}
	\maketitle{ }
\begin{abstract}
In the literature, there are many APN-like functions that generalize the APN properties or are similar to APN functions, e.g. locally-APN functions, 0-APN functions or those with boomerang uniformity 2. 
	In this paper, we study the problem of constructing infinite classes of APN-like but not APN power  functions.  
	
	For one thing, we  find two infinite classes of locally-APN but not APN power functions over $\gf_{2^{2m}}$ with $m$ even, i.e., $\mathcal{F}_1(x)=x^{j(2^m-1)}$ with  $\gcd(j,2^m+1)=1$ and  $\mathcal{F}_2(x)=x^{j(2^m-1)+1}$ with $j = \frac{2^m+2}{3}$. As far as the authors know, our infinite classes of locally-APN but not APN functions are the only two discovered in the last eleven years.  Moreover, we also prove that this infinite class $\mathcal{F}_1$ is not only with the optimal boomerang uniformity $2$, but also has an interesting property that its differential uniformity is strictly greater than its boomerang uniformity.  For another thing,  using the multivariate method, including the above infinite class $\mathcal{F}_1$, we construct seven new infinite classes of 0-APN but not APN power functions. 
\end{abstract}

\begin{IEEEkeywords}
	locally-APN function, function with boomerang uniformity 2, 0-APN function 
\end{IEEEkeywords}

\section{Introduction}

Let $\gf_{2^n}$ be the finite field with $2^n$ elements and $\gf_{2^n}^{*} = \gf_{2^n}\backslash\{0\}$. For a function $F$  from $\gf_{2^n}$ to itself, its \emph{differential uniformity} \cite{nyberg1993differentially}, denoted by $\delta_F$, is  the maximal number of solutions in $\gf_{2^n}$ of the equation $F(x+a)+F(x)=b$ for any $a\in\gf_{2^n}^{*}, b\in\gf_{2^n}$. 
 A function $F$ over $\gf_{2^n}$ is called \emph{Almost Perfect Nonlinear (APN for short)} \cite{nyberg1992provable} if its differential uniformity is equal to $2$. The importance of APN functions in cryptography is that they are the optimal ones to resist differential attacks \cite{biham1991differential}. In the last three decades, APN functions have been extensively studied, and the construction of infinite classes of APN functions is one of the most important topics. Power functions attract more attentions due to their particular algebraic structures. Namely, if $F$ is a power function, $F$ is APN if and only if $F(x+1)+F(x)=b$ has at most two solutions in $\gf_{2^n}$ for any $b\in\gf_{2^n}$. Up to now, there are six classes of APN power functions in the literature, see Table \ref{APN monomials}, and Dobbertin \cite{Dobbertin2001} conjectured that Table  \ref{APN monomials} is complete. For more results about APN functions, interested readers can refer to a recent excellent book by Carlet \cite{carlet-2020}. 

\begin{table}[h]
	\caption{All known APN power families $F(x)=x^d$ over $\gf_{2^n}$} \label{APN monomials}
	\centering
	\begin{tabular}{c c cc }	
		\toprule
		Classes &	Exponents $d$  & Conditions & Ref. \\
		\hline
		Gold & ${2^i+1}$ & $\gcd(i,n)=1$ &  \cite{Gold1968} \\
		\hline
		Kasami & ${2^{2i}-2^i+1}$ & $\gcd(i,n)=1$ &  \cite{Kasami1971The} \\
			\hline
		Welch & ${2^t+3}$ & $n=2t+1$ &  \cite{Dobbertin99} \\
			\hline
		Niho& \begin{tabular}[c]{@{}l@{}}${2^t+2^{t/2}-1}$ \\  ${2^t+2^{(3t+1)/2}-1}$
		\end{tabular}  & \begin{tabular}[c]{@{}l@{}} $n=2t+1, t$ even \\  $n=2t+1, t$ odd
	\end{tabular} &  \cite{Dobbertin99-Niho} \\
	\hline
		Inverse& ${2^{2t}-1} $ & $n=2t+1$ &  \cite{nyberg1993differentially} \\
			\hline
		Dobbertin  & ${2^{4i}+2^{3i}+2^{2i}+2^i-1}$ & $n=5i$ & 
		\cite{Dobbertin2001} \\
		\bottomrule
	\end{tabular}
\end{table}

In addition to APN functions, there are many concepts that generalize the APN properties or are similar to APN functions. In 2011, while working on the differential properties of the functions $x\mapsto x^{2^t-1}$, Blondeau et al. \cite{blondeau2011differential} proposed a concept called locally-APN as follows.
For a power function $F$ on $\gf_{2^n}$, if the equation $F(x+1)+F(x)=b$ has at most two solutions in $\gf_{2^n}$ for any $b\in\gf_{2^n}\backslash\{0,1\}$, then $F$ is \emph{locally-APN}. Obviously, if $F$ is an APN power function, $F$ is locally-APN. In \cite{blondeau2011differential},  Blondeau et al.  obtained an infinite class of locally-APN but not APN functions, i.e., $\mathcal{B}(x) = x^{2^m-1}$ over $\gf_{2^n}$ with $n=2m$ and $m$ even, and proved that $x^{2^t-1}$ is locally-APN if and only if $x^{2^{n-t+1}-1}$ is locally-APN. Note that by the above statement, we can easily produce two more locally-APN but not APN families, i.e., $x^{2^{n-1}-1}$ with $n$ even and $x^{2^{m+1}-1}$ with $n=2m$ and $m$ even, from $x^3$ and  $\mathcal{B}(x)$.  

In the conference SETA2018 \cite{budaghyan2018partially} (for the journal edition, see \cite{budaghyan2020partially}), Budaghyan et al.  introduced a notion of partial APN-ness in order to resolving an open problem of the highest possible algebraic degree of APN functions. For a fixed $x_0\in\gf_{2^n}$, a function $F$ from $\gf_{2^n}$ to itself is called \emph{(parital) $x_0$-APN} if all points $x,y$, satisfying $$F(x_0)+F(x)+F(y)+F(x_0+x+y)=0$$ belong to the curve $(x_0+x)(x_0+y)(x+y)=0$. 
Obviously, if $F$ is APN, then $F$ is $x_0$-APN for any $x_0\in\gf_{2^n}$. Conversely, there are many instances that are $x_0$-APN for some $x_0\in\gf_{2^n}$ but not APN, see \cite{budaghyan2020partially}.  In the conference LOOPS 2019 \cite{pott2019}, by a relation between Steiner triple systems and partial $0$-APN permutations, Pott pointed that for any $n\ge 3$ there are partial $0$-APN permutations on $\gf_{2^n}$. Note that the proof of Pott is not constructive.   Thus it is interesting to construct infinite classes of $x_0$-APN for some $x_0\in\gf_{2^n}$ but not APN. 
 When $F$ is a power function, due to its particular algebraic structure, we only need to consider the partial APN properties of $F$ at $0$ or $1$. Moreover, $F$ is 0-APN if and only if the equation $F(x+1)+F(x)+1=0$ has no solution in $\gf_{2^n}\backslash\{0,1\}$. 
  In \cite{budaghyan2020partiallyDCC,budaghyan2020partially}, Budaghyan et al. explicitly constructed many classes of 0-APN  but not APN power functions $F(x)=x^d$ over $\gf_{2^n}$, see Table \ref{0-APN monomials}.  Moreover, they listed all power functions over $\gf_{2^n}$ for $1\le n\le 10$ that are 0-APN but not APN in \cite{budaghyan2020partially}. 
\begin{table}[h]
	\caption{All known 0-APN but not APN power families $F(x)=x^d$ over $\gf_{2^n}$} \label{0-APN monomials}
	\centering
	\begin{tabular}{c c cc }	
		\toprule
 Classes	&	Exponents $d$  & Conditions & Ref. \\
		\hline
 C1 &		 $2^i-1$ & $\gcd(i-1,n)=1$ &  \cite{budaghyan2020partially} \\
		\hline
 C2 &		 $21$ & $n$ is not a multiple of 6 &  \cite{budaghyan2020partiallyDCC} \\
		\hline
 C3 &		 $2^r+2^t-1$ & $\gcd(r,n)=\gcd(t,n)=1$ &  \cite{budaghyan2020partiallyDCC} \\
		\hline
C4 &		 $2^{2t}+2^t+1$ & $n=4t$ with $t$ even &  \cite{budaghyan2020partiallyDCC} \\
		\hline
C5 &	 $2^n-2^s$ & $\gcd(n,s+1)=1$ & 
		\cite{budaghyan2020partiallyDCC} \\
		\bottomrule
	\end{tabular}
\end{table}   

A concept closely related to differential uniformity is the so-called boomerang uniformity, which was proposed by Cid et al. \cite{cid2018boomerang}, Boura and Canteaut\cite{boura2018boomerang}. The definition of boomerang uniformity in \cite{cid2018boomerang} and \cite{boura2018boomerang} is only suitable for permutations over $\gf_{2^n}$. In 2019, Li et al. \cite{li2019new} provided an equivalent formula as follows to compute the boomerang uniformity. In particular, it removes the permutation condition. For a function $F$ over $\gf_{2^n}$, its \emph{boomerang uniformity} \cite{li2019new}, denoted by $\Delta_F$, equals the maximal number of solutions in $\gf_{2^n}\times \gf_{2^n}$ of the equation system 
\begin{equation*}
	\label{BU}
	\begin{cases}
		F(x+a)+F(y+a) = b \\
		F(x)+F(y) = b,
	\end{cases}
\end{equation*}
for any $a,b\in\gf_{2^n}^{*}$. Similarly, when $F$ is power, it suffices to consider the above equation system with $a=1$.  It was shown that the boomerang uniformity of any permutation over $\gf_{2^n}$ is greater than
or equal to its differential uniformity, and that the lowest possible boomerang uniformity $2$
is achieved by APN functions. However, recently,  Hasan et al. \cite{hasan2021boomerang} considered the boomerang uniformity of the unique locally-APN power function up to now, i.e., $\mathcal{B}(x)=x^{2^m-1}$ over $\gf_{2^n}$ with $n=2m$ and found that this class of power functions is with boomerang uniformity $2$ when $m$ is even although its differential uniformity equals $2^m-2$.   This is also the only known infinite class of functions whose differential uniformity is strictly greater than its boomerang uniformity. 

In this paper, we refer to these functions which are locally-APN, 0-APN or with boomerang uniformity 2 collectively as \emph{APN-like} functions. Moreover, we study the problem of constructing more APN-like power functions. Specifically, on one hand, we  find two infinite classes of locally-APN but not APN power functions over $\gf_{2^{2m}}$ with $m$ even, i.e., $\mathcal{F}_1(x)=x^{j(2^m-1)}$ with $\gcd(j,2^m+1)=1$, which generalizes the first class ($j=1$) proposed by Blondeau et al. \cite{blondeau2011differential} in 2011,  and  $\mathcal{F}_2(x) = x^{j(2^m-1)+1}$ with $j=\frac{2^{m}+2}{3}$, see Theorems \ref{localAPN-Th} and  \ref{Th-Niho}, respectively. It is interesting that the proof of Theorem \ref{localAPN-Th} depends on some basic properties of the Dickson polynomial of the first kind. As far as the authors know, our infinite classes of locally-APN but not APN functions are the only two discovered in the last eleven years.  The experimental results show that $\mathcal{F}_1$ and $\mathcal{F}_2$ can explain most of locally-APN but not APN power instances over $\gf_{2^n}$ for $n\le 12$. The only example that cannot be explained is $x^{219}$ over $\gf_{2^{10}}$. 
Moreover, we also prove that this infinite class $\mathcal{F}_1$ is another one with the optimal boomerang uniformity $2$, which also generalizes the result of Hasan et al. \cite{hasan2021boomerang}. Particularly, the differential uniformity of $\mathcal{F}_1$ is strictly greater than its boomerang uniformity, see Proposition \ref{F-BU2}. Furthermore,  the experimental results show that $\mathcal{F}_1$ explains all (not APN) power instances with boomerang uniformity 2 over $\gf_{2^n}$ for $n\le 12$.
 On the other hand, we try to construct more infinite classes of 0-APN but not APN power functions since there are many such power instances over $\gf_{2^n}$ for $ n\le 10$.  Using the multivariate method
 introduced by Dobbertin \cite{Dobbertin2002} and the resultant of polynomials, we construct seven new infinite classes of 0-APN but not APN power functions, see Theorems \ref{Th-0-APN-even} and \ref{Th-0-APN-odd}. 

 
The rest of this paper is organized as follows. Section \ref{local-APN} provides two new infinite classes of locally-APN but not APN power functions, one of which is also with boomerang uniformity $2$. Next, some infinite classes  of 0-APN but not APN functions are obtained  in Section \ref{0-APN}. Finally, Section \ref{Conclusion} concludes the work of this paper. 

\section{Two infinite classes of locally-APN but not APN power functions}
\label{local-APN}

In this section, we first give an infinite class of locally-APN but not APN power functions, where some properties  of the Dickson polynomials of the first kind will be used. Thus we now recall these important results.  

\begin{Lemma}
\cite{lidl1993dickson}
\label{dickson}
	Let $\mathbb{R}$ be a commutative ring with identity. The Dickson polynomial $D_k(x,a)$ of the first kind of degree $k$ in the indeterminate $x$ and with parameter $a\in \mathbb{R}$ is given as
	$$D_k(x,a)=\sum_{j=0}^{\lfloor \frac{k}{2} \rfloor}\frac{k}{k-j}\begin{pmatrix}
		k-j \\
		j
	\end{pmatrix}(-a)^jx^{k-2j}.$$
	Then the following statements hold.
	\begin{enumerate}
		\item  $D_k(x_1+x_2,x_1x_2)=x_1^k+x_2^k$, where $x_1, x_2$ are two indeterminates;
		\item $b^kD_k(x,a) = D_k\left(bx,b^2a \right)$;
		\item if $\mathbb{R}=\gf_{2^n}$, then for any $a\in\gf_{2^n}^{*}$, $D_k(x,a)$ permutes $\gf_{2^n}$ if and only if $\gcd(k,2^{2n}-1)=1$; 
		\item if $\mathbb{R}=\gf_2$, then the map  $x\mapsto D_k(x,1)$ induces a permutation of $T_1 = \{ x:~ x\in\gf_{2^m},~\tr_m(1/x) = 1  \}$ if and only if $\gcd(k,2^m+1)=1$, where $\tr_m$ is the absolutely trace function on $\gf_{2^m}$.   
	\end{enumerate}
\end{Lemma}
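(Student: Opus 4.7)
The plan is to dispatch the four assertions in sequence, with the substitution $x = y + a/y$ acting as the unifying device for parts (3) and (4).

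For (1), I would induct on $k$ using the three-term recurrence
$$D_k(s,p) = s\, D_{k-1}(s,p) - p\, D_{k-2}(s,p),$$
which follows from the defining sum via Pascal's identity on $\binom{k-j}{j}$. The power sums $x_1^k + x_2^k$ satisfy the identical recurrence in $s = x_1 + x_2$ and $p = x_1 x_2$, with base cases $D_0 = 2 = x_1^0 + x_2^0$ and $D_1 = s = x_1 + x_2$, so induction closes the loop. Part (2) is then a one-line substitution in the defining sum: each summand contributes $(bx)^{k-2j}(-b^2 a)^j = b^k\, x^{k-2j}(-a)^j$, so $b^k$ factors out uniformly.

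For (3), combining (1) with $x = y + a/y$ (for $y$ in a quadratic extension of $\gf_{2^n}$) gives $D_k(x,a) = y^k + (a/y)^k$. For fixed $a \in \gf_{2^n}^*$ and each $x \in \gf_{2^n}$, the pair $\{y, a/y\}$ of roots of $z^2 + xz + a$ lies either in $\gf_{2^n}^*$ (when $\tr_n(a/x^2) = 0$) or in the norm-one coset $N_a = \{y \in \gf_{2^{2n}}^* : y^{2^n+1} = a\}$ (when $\tr_n(a/x^2) = 1$). A cardinality check shows the map $y \mapsto y + a/y$ surjects $\gf_{2^n}^* \cup N_a$ onto $\gf_{2^n}$, two-to-one off the unique fixed point $y = \sqrt{a}$ of the involution $y \leftrightarrow a/y$. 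Consequently $D_k(\cdot, a)$ permutes $\gf_{2^n}$ iff $y \mapsto y^k$ is injective on $\gf_{2^n}^* \cup N_a$; examining the set of ratios $y_1/y_2$ inside $\gf_{2^{2n}}^*$ shows this amounts to $y \mapsto y^k$ being injective on the full cyclic group $\gf_{2^{2n}}^*$, equivalently $\gcd(k, 2^{2n} - 1) = 1$. Part (4) is the specialisation $a = 1$ over $\mathbb{R} = \gf_2$: the constraint $\tr_m(1/x) = 1$ is precisely the irreducibility criterion for $z^2 + xz + 1$ over $\gf_{2^m}$, so $T_1$ corresponds under $x = y + y^{-1}$ bijectively (modulo the involution $y \leftrightarrow y^{-1}$) to $\mu_{2^m+1} \setminus \{1\}$, and the permutation condition reduces to $\gcd(k, 2^m + 1) = 1$.

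The main obstacle is the combinatorial bookkeeping in (3): separating the conjugate pairs of roots lying in $\gf_{2^n}$ from those in the quadratic extension, verifying that the image cardinalities sum to $2^n$, and showing that the set of quotients $y_1/y_2$ drawn from $\gf_{2^n}^* \cup N_a$ actually exhausts $\gf_{2^{2n}}^*$ (forcing the sharp condition $\gcd(k, 2^{2n}-1) = 1$ rather than some weaker divisibility). Once this correspondence is in place, both (3) and (4) reduce to the elementary fact that raising to a power $k$ permutes a finite cyclic group $G$ if and only if $\gcd(k, |G|) = 1$.
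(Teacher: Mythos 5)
The paper does not prove this lemma at all: it is quoted verbatim from the monograph \cite{lidl1993dickson}, so there is no internal argument to compare against, and your sketch has to be judged on its own. On those terms it is essentially correct and follows the classical route. Parts (1) and (2) are fine (recurrence plus induction, and direct substitution into the defining sum). For part (3) your parametrization is the right one: every $x\in\gf_{2^n}$ is $y+a/y$ with $y\in\gf_{2^n}^{*}\cup N_a$, where $N_a=\{y\in\gf_{2^{2n}}^{*}:y^{2^n+1}=a\}$ (note this is the coset of elements of \emph{norm $a$}, not ``norm one''), the two sets meet only at $y=\sqrt{a}$, and the count $(2^n-1)+(2^n+1)-1=2^{n+1}-1$ together with the map being two-to-one off $\sqrt{a}$ gives surjectivity onto $\gf_{2^n}$; sufficiency of $\gcd(k,2^{2n}-1)=1$ then follows because a collision $D_k(x_1,a)=D_k(x_2,a)$ forces $y_1^k=y_2^k$ or $(y_1y_2)^k=a^k$. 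The only genuinely glossed step is the necessity direction: ``the quotients $y_1/y_2$ exhaust $\gf_{2^{2n}}^{*}$'' is not quite the statement you need (mixed quotients of an element of $\gf_{2^n}^{*}$ and an element of $N_a$ do not enter the collision condition in that form). The clean argument picks a prime $p$ dividing $\gcd(k,2^{2n}-1)$, observes that $p$ divides exactly one of $2^n-1$ and $2^n+1$, takes $\zeta$ of order $p$ in $\gf_{2^n}^{*}$ or in $\mu_{2^n+1}$ accordingly, and exhibits two distinct arguments $x_1=y+a/y$ and $x_2=\zeta y+a/(\zeta y)$ with the same image, avoiding the single exceptional $y$ with $y^2=a/\zeta$. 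Likewise in (4) the ``only if'' direction deserves its one line: if $p>1$ divides $\gcd(k,2^m+1)$, take $y\in\mu_{2^m+1}$ of order $p$; then $x=y+y^{-1}\in T_1$ but $D_k(x,1)=y^k+y^{-k}=0\notin T_1$, so $D_k(\cdot,1)$ does not permute $T_1$. With these two points written out, your proof is complete and coincides with the standard argument in the source the paper cites.
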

In addition, the characterization of the number of solutions of quadratic equations is also useful. 
\begin{Lemma}
	\cite{LN1997} 
	\label{quadratic_equation}
	Let $\alpha,\beta\in\gf_{2^m}$, $\alpha\neq0$. Then the quadratic equation $x^2+\alpha x+ \beta = 0$ has solutions in $\gf_{2^m}$ if and only if $\tr_m\left(\frac{\beta}{\alpha^2}\right) = 0$.  
\end{Lemma}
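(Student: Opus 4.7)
The plan is to reduce the equation to a standard additive form and then apply a dimension argument on a natural $\gf_2$-linear map. First I would exploit the assumption $\alpha\neq 0$ and perform the substitution $x=\alpha y$. Dividing by $\alpha^2$ turns $x^2+\alpha x+\beta=0$ into the normalized equation
\[
y^2+y+\gamma=0, \qquad \gamma:=\frac{\beta}{\alpha^{2}}\in\gf_{2^m}.
\]
This reduction is a bijection on solution sets (over $\gf_{2^m}$), so it suffices to characterise when $y^{2}+y=\gamma$ admits a solution in $\gf_{2^m}$.

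Next I would study the additive map $L\colon \gf_{2^m}\to\gf_{2^m}$, $y\mapsto y^{2}+y$. Since squaring is the Frobenius automorphism, $L$ is $\gf_2$-linear. Its kernel consists of solutions of $y^{2}=y$, i.e.\ $\{0,1\}$, so $\ker L$ has $\gf_2$-dimension $1$ and hence $\image L$ has $\gf_2$-dimension $m-1$. On the other hand, for every $y\in\gf_{2^m}$ one has $\tr_m(y^{2}+y)=\tr_m(y^{2})+\tr_m(y)=0$ because $\tr_m$ is invariant under the Frobenius. Thus $\image L\subseteq\ker\tr_m$, and $\ker\tr_m$ is also an $\gf_2$-hyperplane of dimension $m-1$. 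Comparing dimensions forces $\image L=\ker\tr_m$.

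Finally I would chain the equivalences: $y^{2}+y=\gamma$ has a solution in $\gf_{2^m}$ iff $\gamma\in\image L$, and by the preceding step this happens iff $\gamma\in\ker\tr_m$, that is iff $\tr_m(\beta/\alpha^{2})=0$. Combined with the initial normalization, this yields the claim.

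The only mildly non-trivial step is justifying the equality $\image L=\ker\tr_m$; a priori one only has the containment from the Frobenius-invariance calculation, and the dimension count (rank-nullity for $L$, together with the standard fact that $\tr_m\colon\gf_{2^m}\to\gf_2$ is surjective so its kernel has dimension $m-1$) is what promotes containment to equality. Everything else is either a direct substitution or a one-line trace identity, so no serious obstacle is expected.
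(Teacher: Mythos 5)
Your argument is correct, and it is essentially the standard proof of this classical fact: the paper itself gives no proof but simply cites Lidl--Niederreiter, where the result is established by exactly this kind of Artin--Schreier reduction (normalize to $y^2+y=\beta/\alpha^2$ and compare the image of the $\gf_2$-linear map $y\mapsto y^2+y$ with the kernel of the trace via a dimension count). Nothing further is needed.
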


\begin{Rem}
	\label{rem_tr_quadratic}
	\emph{Since for any $\alpha, \beta\in\gf_{2^m}$ with $\alpha\neq0$, $\tr_{2m}\left(\frac{\beta}{\alpha^2}\right)=0$ must hold, by Lemma \ref{quadratic_equation}, the quadratic equation $x^2+\alpha x+ \beta = 0$ has solutions $x_0,x_1$ in $\gf_{2^{2m}}$. Moreover, it is trivial that $x_0+x_1 = \alpha$ and $x_0x_1=\beta$. If the solutions $x_0,x_1$ belong to $\gf_{2^{2m}}\backslash\gf_{2^m}$, then $x_1=x_0^{2^m}$ and $\tr_m\left(\frac{\beta}{\alpha^2}\right) = 1$. Conversely, for any element $x_0\in\gf_{2^{2m}}\backslash\gf_{2^m}$, let $ x_0+x_0^{2^m} =  \alpha $ and $x_0^{2^m+1}=\beta$. Then $x_0,x_0^{2^m}$ are solutions of the equation $x^2+\alpha x+\beta = 0$ in $\gf_{2^{2m}}$ and $\tr_m\left(\frac{\beta}{\alpha^2}\right) = 1$.}
\end{Rem}
The first class of  locally-APN power functions in this section is as follows. 

\begin{Th}
	\label{localAPN-Th}
	Let $n=2m$ with $m$ even and $j$ be a positive integer satisfying $\gcd(j,2^m+1)=1$. Then $\mathcal{F}_1(x)= x^{j\cdot(2^m-1)}$ is locally-APN over $\gf_{2^n}$. 
\end{Th}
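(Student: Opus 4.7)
The plan is to fix $b \in \gf_{2^{2m}} \setminus \{0, 1\}$ and show that $N(b) := |\{x \in \gf_{2^{2m}} : \mathcal{F}_1(x) + \mathcal{F}_1(x+1) = b\}| \leq 2$. Since $\mathcal{F}_1(0) = 0$ and $\mathcal{F}_1(1) = 1$, both $x = 0$ and $x = 1$ produce the value $1$ on the left-hand side, so the hypothesis $b \neq 1$ forces every solution to satisfy $x \notin \{0, 1\}$. For such $x$, both $A := \mathcal{F}_1(x)$ and $B := \mathcal{F}_1(x+1)$ lie in the unit circle $U := \{u \in \gf_{2^{2m}}^{*} : u^{2^m+1} = 1\}$, the unique subgroup of $\gf_{2^{2m}}^{*}$ of order $2^m+1$. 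Because $\gcd(j, 2^m+1) = 1$, the map $u \mapsto u^j$ permutes $U$; equivalently, via the two-to-one map $u \mapsto u + u^{-1}$ from $U \setminus \{1\}$ onto $T_1$, this is Lemma \ref{dickson}(4) applied to the Dickson polynomial $D_j(\cdot, 1)$.

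Next I will reduce the counting problem to a quadratic. Applying the Frobenius $y \mapsto y^{2^m}$ to $A + B = b$ and using $y^{2^m} = y^{-1}$ on $U$ yields $A^{-1} + B^{-1} = b^{2^m}$, so $AB = b/b^{2^m} = b^{1-2^m}$. Hence $\{A, B\}$ are the two roots of $T^2 + bT + b^{1-2^m} = 0$, which the substitution $T = bC$ converts to
\[
C^2 + C + b^{-(2^m+1)} = 0.
\]
Its constant term $b^{-(2^m+1)}$ is the norm of $b^{-1}$ and therefore lies in $\gf_{2^m}^{*}$, so Lemma \ref{quadratic_equation} and Remark \ref{rem_tr_quadratic} apply: when $\tr_m(b^{-(2^m+1)}) = 1$ the two roots lie in $\gf_{2^{2m}} \setminus \gf_{2^m}$, and when $\tr_m(b^{-(2^m+1)}) = 0$ both lie in $\gf_{2^m}$. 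Either way there are at most two candidate pairs $(A, B) \in U \times U$ with $A + B = b$.

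Finally I will show that each candidate pair determines at most one admissible $x$. Let $u_A, u_B \in U$ be the unique $j$-th roots of $A, B$ in $U$, so that $u_A = x^{2^m-1}$ and $u_B = (x+1)^{2^m-1}$. The identities $x^{2^m} = u_A x$ and $x^{2^m} = (x+1)^{2^m} + 1 = u_B x + u_B + 1$ combine to $(u_A + u_B) x = u_B + 1$, and since $b \neq 0$ forces $A \neq B$ and hence $u_A \neq u_B$, we obtain the unique candidate $x = (u_B+1)/(u_A+u_B)$; a short Frobenius computation then verifies that this $x$ indeed satisfies $x^{2^m-1} = u_A$. The only pairs that would produce the forbidden values $x = 0$ or $x = 1$ are $(A, B) = (b+1, 1)$ and $(1, b+1)$, which must be discarded as spurious. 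This gives $N(b) \leq 2$, proving that $\mathcal{F}_1$ is locally-APN. The delicate point I anticipate is the clean derivation of the quadratic $C^2 + C + b^{-(2^m+1)} = 0$ together with the unique recovery of $x$ from $(A, B)$; here the Dickson-polynomial bijection of Lemma \ref{dickson}(4) is essential, as it is exactly what upgrades Blondeau et al.'s $j = 1$ argument to arbitrary $j$ with $\gcd(j, 2^m+1) = 1$.
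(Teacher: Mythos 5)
Your proof is correct, and it takes a genuinely different route from the paper's. The paper's argument is Dickson-polynomial based: it rewrites $(x+1)^{j(2^m-1)}+x^{j(2^m-1)}$ in terms of $D_j(\cdot,1)$, splits into the cases $b\in\gf_{2^m}^{*}$ and $b\in\gf_{2^n}\setminus\gf_{2^m}$, checks by trace computations that the relevant arguments lie in $T_1$, and concludes from the permutation property of $D_j$ on $T_1$ (Lemma \ref{dickson}). You instead work directly in the unit circle $U=\{u\in\gf_{2^{2m}}^{*}:u^{2^m+1}=1\}$: since $A=\mathcal{F}_1(x)$ and $B=\mathcal{F}_1(x+1)$ lie in $U$ for $x\notin\{0,1\}$, one application of the Frobenius turns $A+B=b$ into $A^{-1}+B^{-1}=b^{2^m}$, hence $AB=b^{1-2^m}$, so $\{A,B\}$ must be the root set of the single quadratic $T^2+bT+b^{1-2^m}$ determined by $b$; and since $u\mapsto u^{j}$ permutes $U$ (which is all that $\gcd(j,2^m+1)=1$ is really needed for, so the Dickson machinery can be dispensed with), $x$ is recovered uniquely and linearly from the ordered pair via $(u_A+u_B)x=u_B+1$, with $u_A\neq u_B$ because $b\neq0$. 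This gives the bound of two solutions uniformly in $b$, with no case distinction and no trace computations; indeed the trace dichotomy you quote from Lemma \ref{quadratic_equation} and Remark \ref{rem_tr_quadratic}, and the converse check that the recovered $x$ satisfies $x^{2^m-1}=u_A$, are not needed for the upper bound and could be dropped. One structural difference worth noting: you never use the hypothesis that $m$ is even, which is legitimate for the theorem as stated (locally-APN only constrains $b\notin\{0,1\}$); the paper needs $m$ even only because it proves the stronger claim that $b=1$ admits no solutions besides $0,1$ (via $1\notin T_1$), a strengthening it later relies on in Proposition \ref{F-BU2}. In your framework that is precisely where parity enters: for $b=1$ the quadratic becomes $T^2+T+1$, whose roots lie in $U$ exactly when $3\mid 2^m+1$, i.e.\ when $m$ is odd (and then the two elements of $\gf_{4}\setminus\gf_2$ are genuine extra solutions), so for $m$ even your method recovers the paper's stronger statement in one line, which would be worth adding if you also want the boomerang application.
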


\begin{proof}
	According to definition, it suffices to show that the equation
	\begin{equation}
		\label{locAPN-eq1}
		(x+1)^{j\cdot(2^m-1)}+x^{j\cdot(2^m-1)}=b
	\end{equation}  has zero or two solutions in $\gf_{2^n}$ for any $b\in\gf_{2^n}\backslash\{0,1\}$. Actually, we can prove the statement for $b\in\gf_{2^n}^{*}$.

	First of all, it is trivial that $x\in\{0,1\}$ are solutions of Eq. \eqref{locAPN-eq1} only for $b=1$. Next, when $x\in \gf_{2^n}\backslash \{0,1\}$, Eq. \eqref{locAPN-eq1} is equivalent to 
	\begin{equation} 	\label{locAPN-eq2}
		\frac{\left(x^{2^m+1}+x\right)^j+\left(x^{2^m+1}+x^{2^m}\right)^j}{\left(x^2+x\right)^j} = b.
	\end{equation}
	In addition, if $x\in\gf_{2^m}\backslash\{0,1\}$, then $	(x+1)^{j\cdot(2^m-1)}+x^{j\cdot(2^m-1)} = 0$. Thus any element $x\in\gf_{2^m}\backslash\{0,1\}$ is not a solution of Eq. \eqref{locAPN-eq1} for any $b\in\gf_{2^n}^{*}$.  In the following, we consider the solutions  $x\in \gf_{2^n}\backslash\gf_{2^m}$ and the  proof is divided into two parts: (i) $b\in\gf_{2^m}^{*}$; (ii) $b\in\gf_{2^n}\backslash \gf_{2^m}$.

(i)  When $b\in\gf_{2^m}^{*}$, by Eq. \eqref{locAPN-eq2}, we know that 
$$(x^2+x)^j = \frac{\left(x^{2^m+1}+x\right)^j+\left(x^{2^m+1}+x^{2^m}\right)^j}{b}\in\gf_{2^m}^{*}.$$
Since $\gcd(j,2^m+1)=1$, we have $x^2+x\in\gf_{2^m}^{*}$, denoted by $\beta$. Then by Remark \ref{rem_tr_quadratic}, we have $\tr_m(\beta)=1$, $x+x^{2^m}=1$ and $x^{2^m+1}=\beta$. Moreover, the left part of Eq. \eqref{locAPN-eq2} becomes
\begin{equation}
	\label{casei}
	\frac{\left(x^{2^m+1}+x\right)^j+\left(x^{2^m+1}+x^{2^m}\right)^j}{\left(x^2+x\right)^j} = \frac{(\beta+x)^j+(\beta+x+1)^j}{\beta^j} = \frac{D_j(1,\beta^2)}{\beta^j} = D_j \left(\frac{1}{\beta},1\right),
\end{equation} 
where the second and last equalities hold due to Lemma \ref{dickson} 1) and 2), respectively. 

 Let $T_1=\{ x:~ x\in\gf_{2^m},~\tr_m(1/x) = 1  \}$. Then $1\not\in T_1$ since $m$ is even. By Lemma \ref{dickson} 4), when $\gcd(j,2^m+1)=1$,  the map  $u\mapsto D_j(u,1)$ induces a permutation of $T_1$ and thus $D_j \left(\frac{1}{\beta},1\right)\neq 1$. Hence  Eq. \eqref{locAPN-eq2} has no solution in $\gf_{2^n}\backslash\gf_{2^m}$ for $b=1$. Namely, when $b=1$, Eq. \eqref{locAPN-eq1} has only two solutions $x\in\{0,1\}$ in $\gf_{2^n}$. 

We now assume that for a given $b\in\gf_{2^m}\backslash\{0,1\}$, there exist two different elements $x_1, x_2\in\gf_{2^n}\backslash\gf_{2^m}$ such that $x_1,x_2$ are both solutions of Eq. \eqref{locAPN-eq1}. Let $x_1^2+x_1=\beta_1$ and $x_2^2+x_2=\beta_2$. Then $\beta_i\in\gf_{2^m}^{*}$, $\tr_m(\beta_i)=1$ for $i=1,2$ and by Eq. \eqref{casei}, we have 
\begin{equation}
	\label{casei-1}
	D_j \left(u_1,1\right) =  D_j \left(u_2,1\right),
\end{equation}
where $u_i=\frac{1}{\beta_i}$ for $i=1,2$.
Since  $\tr_m\left(\frac{1}{u_i}\right) = \tr_m(\beta_i) = 1$, $u_i\in T_1$ for $i=1,2$. Together with $\gcd(j,2^m+1)=1$ and Lemma \ref{dickson} 4), we know that the map  $u\mapsto D_j(u,1)$ induces a permutation of $T_1$. Thus by Eq. \eqref{casei-1}, we have $u_1=u_2$, i.e., $\beta_1=\beta_2$. Hence $x_1^2+x_1 = x_2^2+x_2$ and then $x_1=x_2$ or $x_1=x_2+1$. 

Therefore, Eq. \eqref{locAPN-eq1} has zero or two solutions in $\gf_{2^n}$ for any $b\in\gf_{2^m}^{*}$.

(ii) Now we consider the case $b\in\gf_{2^n}\backslash\gf_{2^m}$.
	Raising Eq. \eqref{locAPN-eq2} into its $2^m$-th power, we get 
	\begin{equation} 	\label{locAPN-eq3}
		\frac{\left(x^{2^m+1}+x\right)^j+\left(x^{2^m+1}+x^{2^m}\right)^j}{
			\left(x^{2^{m+1}}+x^{2^m}\right)^j} = b^{2^m}.
	\end{equation}
Together with Eqs. \eqref{locAPN-eq2} and \eqref{locAPN-eq3}, we have 
\begin{eqnarray*}
	b+b^{2^m} &=& \left[ \left(x^{2^m+1}+x\right)^j+\left(x^{2^m+1}+x^{2^m}\right)^j  \right] \left[ \frac{1}{\left(x^2+x\right)^j} + \frac{1}{\left(x^{2^{m+1}}+x^{2^m}\right)^j} \right] \\
	&=& \left[\left(x^{2^m+1}+x\right)^j+\left(x^{2^m+1}+x^{2^m}\right)^j  \right] \frac{\left(x^2+x\right)^j+\left(x^{2^{m+1}}+x^{2^m}\right)^j}{\left(x^2+x\right)^j\left(x^{2^{m+1}}+x^{2^m}\right)^j}
\end{eqnarray*}
and 
$$b^{2^{m}+1} = \frac{\left[ \left(x^{2^m+1}+x\right)^j+\left(x^{2^m+1}+x^{2^m}\right)^j \right]^2}{\left(x^2+x\right)^j\left(x^{2^{m+1}}+x^{2^m}\right)^j}.$$
    
Now we assume that $x+x^{2^m} = \alpha$ and $x^{2^m+1}=\beta$. Then $\alpha,\beta\in\gf_{2^m}^{*}$ and $x^2+\alpha x + \beta = 0$.   Moreover, since  $x\in\gf_{2^n}\backslash\gf_{2^m}$,  by Remark \ref{rem_tr_quadratic}, we get $\tr_m\left(\frac{\beta}{\alpha^2}\right) = 1$. Furthermore,    by $x+x^{2^m} = \alpha$ and $x^{2^m+1}=\beta$, we have 
$$\left(x^2+x\right)+\left(x^{2^{m+1}}+x^{2^m}\right) = \alpha^2 + \alpha,$$
$$\left(x^{2^m+1}+x\right) +  \left(x^{2^m+1}+x^{2^m}\right) = \alpha $$
and 
\begin{eqnarray*}
	\left(x^2+x\right)\left(x^{2^{m+1}}+x^{2^m}\right) 
	&=& x^{2^{m+1}+2}+x^{2^{m+1}+1}+x^{2^m+2}+x^{2^m+1}\\
	&=& 	 \beta^2+\beta+\alpha\beta.
\end{eqnarray*}
Together with the above three equalities and Lemma \ref{dickson} 1), we get 
\begin{equation}
	\label{case2}
	b+b^{2^m} = D_j(\alpha,\beta^2+\beta+\alpha\beta)  \cdot \frac{D_j(\alpha^2+\alpha,\beta^2+\beta+\alpha\beta)}{(\beta^2+\beta+\alpha\beta)^j} 
\end{equation} 
and 
\begin{equation}
	\label{case2-1}
	b^{2^{m+1}} = \frac{D_j(\alpha,\beta^2+\beta+\alpha\beta)^2}{(\beta^2+\beta+\alpha\beta)^j}.
\end{equation}
Since $b\in\gf_{2^n}\backslash\gf_{2^m}$, we have $b+b^{2^m}\neq 0$ and thus $\alpha\neq 1$. 

 Thanks to Lemma \ref{dickson} 2), Eqs. \eqref{case2} and \eqref{case2-1} can be simplified further as 
 $$b+b^{2^m} = D_j\left( \frac{\alpha}{\sqrt{\beta^2+\beta+\alpha\beta}}, 1 \right) \cdot D_j\left(\frac{\alpha^2+\alpha}{\sqrt{\beta^2+\beta+\alpha\beta}}, 1\right)$$
 and 
 $$b^{2^{m+1}} = D_j\left(\frac{\alpha}{\sqrt{\beta^2+\beta+\alpha\beta}}, 1\right)^2.$$
 In the following, we show that  $\frac{\alpha}{\sqrt{\beta^2+\beta+\alpha\beta}}$ and  $\frac{\alpha^2+\alpha}{\sqrt{\beta^2+\beta+\alpha\beta}}$ both belong to the set $T_1$. 
 For one thing, since $$\frac{{\beta^2+\beta+\alpha\beta}}{\alpha^2} = \frac{\beta^2}{\alpha^2} + \frac{\beta}{\alpha} + \frac{\beta}{\alpha^2}, $$
 we have $$\tr_m\left(\frac{\sqrt{\beta^2+\beta+\alpha\beta}}{\alpha}\right) = \tr_m\left(\frac{{\beta^2+\beta+\alpha\beta}}{\alpha^2}\right) = \tr_m\left(\frac{\beta}{\alpha^2}\right) = 1;$$
 for the other thing, 
 \begin{eqnarray*}
  \tr_m\left( \frac{\sqrt{\beta^2+\beta+\alpha\beta}}{\alpha^2+\alpha} \right) 
  &=&\tr_m\left( \frac{\beta^2+\beta+\alpha\beta}{\alpha^4+\alpha^2} \right) \\
  &=&\tr_m\left(  \frac{\beta^2}{\alpha^4+\alpha^2}  + \frac{\beta+\alpha\beta}{\alpha^4+\alpha^2} \right) \\
  &=&\tr_m\left(  \frac{\beta}{\alpha^2+\alpha}  + \frac{\beta+\alpha\beta}{\alpha^4+\alpha^2} \right) \\
  &=& \tr_m\left( \frac{\beta (\alpha^2+\alpha )}{\alpha^4+\alpha^2} + \frac{\beta+\alpha\beta}{\alpha^4+\alpha^2} \right) \\
  &=& \tr_m\left( \frac{\beta (\alpha^2+1)}{\alpha^4+\alpha^2} \right) = \tr_m\left(\frac{\beta}{\alpha^2}\right) = 1.
 \end{eqnarray*}

Now given any $b\in\gf_{2^n}\backslash \gf_{2^m}$, if there exist two different elements $x_1,x_2\in\gf_{2^n}\backslash\gf_{2^m}$ such that $x_1,x_2$ are both solutions of Eq. \eqref{locAPN-eq1}. Let $x_1+x_1^{2^m} = \alpha_1$, $x_1^{2^m+1}=\beta_1$, $x_2+x_2^{2^m} = \alpha_2$ and $x_2^{2^m+1}=\beta_2$. Then $\alpha_i, \beta_i\in\gf_{2^m}^{*}$, $\alpha_i\neq 1$, $\tr_m\left(\frac{\beta_i}{\alpha_i^2}\right)=1$ for $i=1,2$ and we have 
\begin{eqnarray*}
b+b^{2^m}&=&	D_j\left( \frac{\alpha_1}{\sqrt{\beta_1^2+\beta_1+\alpha_1\beta_1}}, 1 \right)\cdot D_j\left(\frac{\alpha_1^2+\alpha_1}{\sqrt{\beta_1^2+\beta_1+\alpha_1\beta_1}}, 1\right) \\
	&=& D_j\left( \frac{\alpha_2}{\sqrt{\beta_2^2+\beta_2+\alpha_2\beta_2}}, 1 \right)\cdot D_j\left(\frac{\alpha_2^2+\alpha_2}{\sqrt{\beta_2^2+\beta_2+\alpha_2\beta_2}}, 1\right)
\end{eqnarray*}
and 
$$b^{2^m+1} = D_j\left(\frac{\alpha_1}{\sqrt{\beta_1^2+\beta_1+\alpha_1\beta_1}}, 1\right)^2 = D_j\left(\frac{\alpha_2}{\sqrt{\beta_2^2+\beta_2+\alpha_2\beta_2}}, 1\right)^2.$$
Together with $\gcd(j,2^m+1)=1$, $\frac{\alpha_i}{\sqrt{\beta_i^2+\beta_i+\alpha_i\beta_i}}\in T_1$ for $i=1,2$ and Lemma \ref{dickson} 3), 
\begin{equation}
	\label{locAPN-eq4} 
	\frac{\alpha_1}{\sqrt{\beta_1^2+\beta_1+\alpha_1\beta_1}} = \frac{\alpha_2}{\sqrt{\beta_2^2+\beta_2+\alpha_2\beta_2}}.
\end{equation}
Similarly, we have 
\begin{equation}
	\label{locAPN-eq5}
	\frac{\alpha_1^2+\alpha_1}{\sqrt{\beta_1^2+\beta_1+\alpha_1\beta_1}} = \frac{\alpha_2^2+\alpha_2}{\sqrt{\beta_2^2+\beta_2+\alpha_2\beta_2}}. 
\end{equation}
By Eqs. \eqref{locAPN-eq4} and \eqref{locAPN-eq5}, we get $$\frac{\alpha_1}{\alpha_1^2+\alpha_1} = \frac{\alpha_2}{\alpha_2^2+\alpha_2}$$ and then $\alpha_1 = \alpha_2$. Moreover, 
$$\beta_1^2+\beta_1+\alpha_1\beta_1 = \beta_2^2+\beta_2+\alpha_2\beta_2.$$
Thus $\beta_2=\beta_1$ or $\beta_2 = \beta_1+\alpha_1+1$. 

(ii-1) For the subcase $\beta_2=\beta_1$, we have 
$$x_1^2+\alpha_1 x_1+\beta_1+x_2^2+\alpha_1x_2+\beta_1=0$$
and thus $$(x_1+x_2)^2+\alpha_1(x_1+x_2)=0,$$
i.e.,
$$x_2=x_1, ~\text{or}~x_2=x_1+\alpha_1=x_1^{2^m}.$$
Moreover, since it is clear that $$(x_1+1)^{j\cdot(2^m-1)}+x_1^{j\cdot(2^m-1)}\neq (x_1^{2^m}+1)^{j\cdot(2^m-1)}+x_1^{2^m\cdot j\cdot(2^m-1)},$$
$x_1$ and $ x_1^{2^m}$  cannot be solutions to Eq. \eqref{locAPN-eq1} at the same time. 

(ii-2) For the subcase  $\beta_2=\beta_1+\alpha_1+1$, we have 
$$x_1^2+\alpha_1 x_1+\beta_1+x_2^2+\alpha_1x_2+\beta_1+\alpha_1+1=0$$
and thus 
$$(x_1+x_2)^2+\alpha_1(x_1+x_2)+\alpha_1+1=0.$$
Then $$x_2=x_1+1, ~\text{or}~x_2=x_1+\alpha_1+1=x_1^{2^m}+1.$$
Similarly, $x_1+1$ and $ x_1^{2^m}+1$  cannot be solutions to Eq. \eqref{locAPN-eq1} at the same time.

Together with the subcases (ii-1) and (ii-2), we know that Eq. \eqref{locAPN-eq1} has zero or two solutions in $\gf_{2^n}$ for any $b\in\gf_{2^n}\backslash \gf_{2^m}$.

In conclusion,  Eq. \eqref{locAPN-eq1} has zero or two solutions in $\gf_{2^n}$ for any $b\in\gf_{2^n}^{*}$ and thus $F$ is locally-APN over $\gf_{2^n}$.
\end{proof}

In the following, we prove that  the power function in Theorem \ref{localAPN-Th} is not only with the optimal boomerang uniformity $2$, but also has an interesting property that its differential uniformity is strictly greater than its boomerang uniformity. 

\begin{Prop}\label{F-BU2}
	Let  $n=2m$ with $m$ even, let $j$ be a positive integer satisfying $\gcd(j,2^m+1)=1$ and $ \mathcal{F}_1(x) = x^{j\cdot(2^m-1)} \in \gf_{2^n}[x] $. Then the differential uniformity of $\mathcal{F}_1$ is $2^m-2$ and the boomerang uniformity of $\mathcal{F}_1$ equals $2$.
\end{Prop}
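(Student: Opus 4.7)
The plan is to leverage Theorem~\ref{localAPN-Th}, which already shows that $D_1\mathcal{F}_1(x):=\mathcal{F}_1(x+1)+\mathcal{F}_1(x)$ attains every non-zero value at most twice. For the differential uniformity it then suffices to count the kernel, and for the boomerang uniformity it suffices to exclude non-trivial solutions of the system.

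For $\delta_{\mathcal{F}_1}$, I would analyze the kernel of $D_1\mathcal{F}_1$. For $x\notin\{0,1\}$ the equation $D_1\mathcal{F}_1(x)=0$ rewrites as $\left(\tfrac{x+1}{x}\right)^{j(2^m-1)}=1$. The arithmetic step
\[
\gcd\bigl(j(2^m-1),\,2^{2m}-1\bigr)=(2^m-1)\gcd(j,2^m+1)=2^m-1
\]
forces $(x+1)/x$ into the unique subgroup of order $2^m-1$ of $\gf_{2^n}^{*}$, namely $\gf_{2^m}^{*}$. Solving $(x+1)/x=c$ for $c\in\gf_{2^m}\setminus\{0,1\}$ yields $x=1/(c+1)\in\gf_{2^m}\setminus\{0,1\}$, producing exactly $2^m-2$ kernel elements. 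Combined with Theorem~\ref{localAPN-Th} this gives $\delta_{\mathcal{F}_1}=2^m-2$.

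For $\Delta_{\mathcal{F}_1}$, I fix $a=1$ (permissible for a power function) and rephrase the boomerang system as
\[
D_1\mathcal{F}_1(x)=D_1\mathcal{F}_1(y)=:c,\qquad \mathcal{F}_1(x)+\mathcal{F}_1(y)=b\neq 0.
\]
The trivial pairs $(x_0,x_0+1)$ and $(x_0+1,x_0)$, where $x_0$ runs over the at-most-two solutions of $D_1\mathcal{F}_1(x)=b$ guaranteed by Theorem~\ref{localAPN-Th}, contribute exactly two ordered pairs whenever $b$ is hit and none otherwise. The remaining task is to exclude non-trivial solutions, namely those with $x\neq y$ and $y\neq x+1$. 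If $c\neq 0$, Theorem~\ref{localAPN-Th} confines $\{x,y\}$ to $\{x_0,x_0+1\}$ for a single $x_0$, forcing $y=x+1$, a contradiction. If $c=0$, the kernel description above places both $x$ and $y$ in $\gf_{2^m}\setminus\{0,1\}$, on which $\mathcal{F}_1$ is identically $1$, so $b=\mathcal{F}_1(x)+\mathcal{F}_1(y)=0$, again a contradiction. Hence $\Delta_{\mathcal{F}_1}=2$.

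The only real subtlety is the interplay between the two parts: the $c=0$ subcase of the boomerang analysis depends on the fact, extracted from the kernel computation, that the entire preimage $\{x:D_1\mathcal{F}_1(x)=0\}$ sits in $\gf_{2^m}^{*}$, where $\mathcal{F}_1$ collapses to the constant $1$. This is precisely what makes the $(2^m-2)$-element kernel harmless for the boomerang count and yields the strict inequality $\Delta_{\mathcal{F}_1}<\delta_{\mathcal{F}_1}$ for $m\geq 4$. Everything else is a direct bookkeeping application of Theorem~\ref{localAPN-Th}.
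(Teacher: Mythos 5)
Your proposal is correct and follows essentially the same route as the paper: the same kernel count via $z=(x+1)/x$ and $\gcd\bigl(j(2^m-1),2^{2m}-1\bigr)=2^m-1$, and the same boomerang analysis splitting on $c\neq 0$ (where Theorem \ref{localAPN-Th} forces $y\in\{x,x+1\}$) versus $c=0$ (where the kernel lies in $\gf_{2^m}\setminus\{0,1\}$, on which $\mathcal{F}_1\equiv 1$, so $b=0$ is impossible). Your explicit identification of the kernel with $\gf_{2^m}\setminus\{0,1\}$ is just a cleaner, equivalent description of the paper's set $Z$.
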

\begin{proof}
	For the differential uniformity, on the basis of Theorem \ref{localAPN-Th}, it suffices to compute the number of solutions in $\gf_{2^n}$ of the equation 
	\begin{equation}
		\label{BU2}
		(x+1)^{j \cdot (2^m-1)}+x^{j \cdot (2^m-1)}=0.
	\end{equation}
	 It is clear that $0, 1$ are not the solutions of Eq. \eqref{BU2}.  Let $z=\frac{x+1}{x}$. Then Eq. \eqref{BU2} becomes $z^{j \cdot (2^m-1)}=1$. Since $\gcd\left(j \cdot (2^m-1), 2^n-1\right) = 2^m-1$, the equation $z^{j \cdot (2^m-1)}=1$ has exactly $(2^m-1)$ solutions in $\gf_{2^n}$. However, it is clear that $z\neq 1$ and thus Eq. \eqref{BU2} has exactly $(2^m-2)$ solutions in $\gf_{2^n}$, belonging to the set 
	 \begin{equation}
	 	\label{Z}
	 	Z = \left\{ \frac{1}{u\gamma^{j^{-1}\cdot(2^m+1)}+1}: u \in\gf_{2^m}\backslash\{0, \gamma^{-j^{-1}\cdot(2^m+1)}\} \right\},
	 \end{equation}
	 where  $\gamma$ is a primitive element in $\gf_{2^m}$, $j^{-1}$ is the inverse of $j$ module $2^m+1$. Therefore, the differential uniformity of $F$ is $2^m-2$.
	 
	 For the boomerang uniformity, according to the definition, we only need to show that the equation system 
	    \begin{subequations} 
	 		\renewcommand\theequation{\theparentequation.\arabic{equation}}      	\label{BU2-eq1}
	 		\begin{empheq}[left={\empheqlbrace\,}]{align}
	 	&	(x+1)^{j \cdot (2^m-1)}+(y+1)^{j \cdot (2^m-1)} = x^{j \cdot (2^m-1)}+y^{j \cdot (2^m-1)}  	\label{BU2-eq1-1}  \\ 
	 	&	x^{j \cdot (2^m-1)}+y^{j \cdot (2^m-1)} = b.     \label{BU2-eq1-2} 
	 			\end{empheq}
	 	\end{subequations}
 has at most two solutions in $\gf_{2^n}\times\gf_{2^n}$ for any $b\in\gf_{2^n}^{*}$. We first consider Eq. \eqref{BU2-eq1-1}, 
 which is equivalent to 
 \begin{equation}
 	\label{BU2-eq2}
 	(x+1)^{j \cdot (2^m-1)}+ x^{j \cdot (2^m-1)} = (y+1)^{j \cdot (2^m-1)}+y^{j \cdot (2^m-1)},
 \end{equation}
  denoted by $c\in\gf_{2^n}$. 
 If $c\in\gf_{2^n}^{*}$, then by Theorem \ref{localAPN-Th},  the equation $(x+1)^{j \cdot (2^m-1)}+ x^{j \cdot (2^m-1)} = c$ has at most two solutions and thus Eq. \eqref{BU2-eq2} has solutions $y=x$ or $y=x+1$. If $c=0$, then $x, y \in Z$. 
 
Now we consider the solutions of Eqs. \eqref{BU2-eq1}. If  $y=x$ or $y=x+1$, then Eq. \eqref{BU2-eq1-2} becomes $0=b$ or $x^{j \cdot (2^m-1)}+ (x+1)^{j \cdot (2^m-1)} = b$, which both has at most two solutions in $\gf_{2^n}$ for any $b\in\gf_{2^n}^{*}$. If $x,y \in Z$, then $x,y\in\gf_{2^m}$ and  Eq. \eqref{BU2-eq1-2} becomes $0=b$, which has no solution in $\gf_{2^n}$ for any $b\in\gf_{2^n}^{*}$. 

In conclusion, Eqs. \eqref{BU2-eq1} has at most two solutions in $\gf_{2^n}\times \gf_{2^n}$ and thus the boomerang uniformity of $\mathcal{F}$ is $2$. 
\end{proof}


Next, we provide the second infinite class of locally-APN but not APN functions. 
\begin{Th}
	\label{Th-Niho}
	Let $n=2m$ with $m$ even and  $j=\frac{2^m+2}{3}$. Then $\mathcal{F}_2(x) = x^{(2^m-1)j+1}$ is locally-APN over $\gf_{2^n}$. 
\end{Th}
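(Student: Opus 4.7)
The plan is to mirror the structure of the proof of Theorem \ref{localAPN-Th}, adapting it to the Niho-type exponent $d = (2^m - 1)j + 1 = (2^{2m}+2^m+1)/3$. Since $d \equiv 1 \pmod{2^m-1}$, every $x \in \gf_{2^m}$ satisfies $x^d = x$, so $(x+1)^d + x^d = 1$ holds identically on $\gf_{2^m}$. This immediately shows that for any $b \in \gf_{2^n}^* \setminus \{1\}$ the equation $(x+1)^d + x^d = b$ has no solution in $\gf_{2^m}$; it also clarifies why $b=1$ must be excluded from the locally-APN definition here, since $\mathcal{F}_2$ restricts to the identity on $\gf_{2^m}$ and thus concentrates $2^m$ solutions at $b=1$.

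For $x \in \gf_{2^n} \setminus \gf_{2^m}$ I would parametrize by $\alpha = x + x^{2^m} \in \gf_{2^m}^*$ and $\beta = x^{2^m+1} \in \gf_{2^m}^*$, where by Remark \ref{rem_tr_quadratic} we have $\tr_m(\beta/\alpha^2) = 1$ and the substitution $x \mapsto x+1$ fixes $\alpha$ and sends $\beta$ to $\beta + \alpha + 1$. The crucial algebraic tool is the cube identity $3d = 2^{2m}+2^m+1$: for any $y \in \gf_{2^n}$ one has $(y^d)^3 = y \cdot y^{2^m} \cdot y^{2^{2m}} = y^2\, y^{2^m}$, whence
$$(x^d)^3 = \beta\, x, \qquad \left((x+1)^d\right)^3 = (\beta+\alpha+1)(x+1).$$
Writing $B = (x+1)^d + x^d$ and cubing via $(A+C)^3 = A^3 + C^3 + AC(A+C)$ then produces the workable relation
$$(x^2+x)^d \cdot B = B^3 + (\alpha+1)(x+1) + \beta,$$
and a parallel computation after applying the $2^m$-Frobenius yields a companion relation, so that $B + B^{2^m}$ and $B^{2^m+1}$ become rational functions of $\alpha, \beta$ that are symmetric under $x \mapsto x^{2^m}$.

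From there I would split the analysis exactly as in Theorem \ref{localAPN-Th} into the cases $b \in \gf_{2^m}^*$ and $b \in \gf_{2^n} \setminus \gf_{2^m}$, and in each case reduce to an identity of the form $D_j(u,1) = D_j(u',1)$ with $u, u' \in T_1 = \{t \in \gf_{2^m} : \tr_m(1/t)=1\}$. Since $3j = 2^m + 2$ gives $\gcd(j, 2^m+1) = 1$, Lemma \ref{dickson} (4) provides the injectivity of $D_j(\cdot,1)$ on $T_1$, forcing $u = u'$, which pins down $\alpha$ and $\beta$ up to the shift $\beta \mapsto \beta + \alpha + 1$. This leaves at most two candidate pairs $\{x, x^{2^m}\}$ or $\{x+1, x^{2^m}+1\}$, and the elementary argument from subcases (ii-1) and (ii-2) of Theorem \ref{localAPN-Th} rules out both members of a pair solving the equation simultaneously. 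The main obstacle is the algebraic reduction in the second paragraph: translating the cubed identities into closed-form Dickson-polynomial evaluations while handling the cube-root ambiguity present in characteristic two (since $3 \mid 2^m - 1$ for even $m$). Once the Dickson identity is in place, the injectivity supplied by Lemma \ref{dickson} closes the argument in the same way as before.
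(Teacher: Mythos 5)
Your preliminary observations are correct (the identity $x^d=x$ on $\gf_{2^m}$, the value $1$ of the derivative there, the cube identity $3d=2^{2m}+2^m+1$ giving $(x^d)^3=\beta x$ and $((x+1)^d)^3=(\beta+\alpha+1)(x+1)$, and $\gcd(j,2^m+1)=1$), but the proposal has a genuine gap exactly at its central step. The plan is to imitate Theorem \ref{localAPN-Th} and land on an identity $D_j(u,1)=D_j(u',1)$ with $u,u'\in T_1$, yet the reduction that would produce such an identity is never carried out, and the analogy breaks precisely there: for $\mathcal{F}_1$ the exponent is a multiple of $2^m-1$, so $(x+1)^{j(2^m-1)}+x^{j(2^m-1)}$ is a ratio whose numerator is a sum of $j$-th powers $u^j+v^j$ of two quantities whose sum and product are symmetric in $x,x^{2^m}$ — exactly the shape that $D_j(u+v,uv)=u^j+v^j$ converts into Dickson evaluations. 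For $d=(2^m-1)j+1$ one has $x^d=x\,(x^{2^m-1})^j$, and the extra factor $x$ destroys this structure: $(x+1)^d+x^d$ is not a sum of $j$-th powers of two quantities with tractable elementary symmetric functions. Your "workable relation" $(x^2+x)^d\,B=B^3+(\alpha+1)(x+1)+\beta$ is algebraically true but is not closed in $(\alpha,\beta)$: it still contains $x$ itself, and $(x^2+x)^d$ is not controlled either, since $x^2+x\notin\gf_{2^m}$ (its $2^m$-th power is $x^2+x+\alpha^2+\alpha$). Moreover, because $m$ is even we have $3\mid 2^m-1$, so cubing is $3$-to-$1$ on $\gf_{2^n}^*$, and passing from statements about the cubed data back to a count of solutions $x$ requires an argument you do not supply. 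In short, the step you yourself call "the main obstacle" is the proof, and Lemma \ref{dickson} cannot be invoked until it is done; it is far from clear that it can be done in the form you describe.

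For comparison, the paper's proof does not use Dickson polynomials at all for $\mathcal{F}_2$. It writes $x=y_1z_1$ and $x+1=y_2z_2$ with $y_i\in\gf_{2^m}^{*}$ and $z_i\in U_m\setminus\{1\}$ (the unit circle $U_m=\{z:z^{2^m+1}=1\}$), extracts cube roots inside $U_m$ via $z_i=s_i^3$ (possible because $\gcd(3,2^m+1)=1$ for $m$ even), and turns Eq. \eqref{Niho1-eq1} into the bilinear system $y_1s_1^3+y_2s_2^3+1=0$, $y_1s_1+y_2s_2+b=0$. Eliminating $y_1$, then imposing the "reality" conditions $y_2^{2^m}=y_2$ and $y_1^{2^m}=y_1$, the proof collapses everything to the single relation $(b^{2^{m+2}}+b^2)s_2^{4}+(b^{2^{m+1}}+b^4)s_2^{-4}=1+b^{2^{m+1}+2}$, which is quadratic in $s_2^4$ and hence admits at most two $s_2\in U_m\setminus\{1\}$, giving at most two solutions $x$. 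If you want to salvage your approach you would need to either carry out the symmetric-function reduction explicitly (and handle the cube-root ambiguity), or switch to this polar-decomposition elimination, which is tailored to the Niho shape of the exponent.
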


\begin{proof}
	It suffices to show that the equation
	\begin{equation}
		\label{Niho1-eq1}
		(x+1)^{(2^m-1)j+1}+x^{(2^m-1)j+1} = b
	\end{equation}
	 has at most two solutions in $\gf_{2^n}$ for any $b\in\gf_{2^n}\backslash\{0,1\}$. 
	 First of all, if $x\in\gf_{2^m}$, then $$(x+1)^{(2^m-1)j+1}+x^{(2^m-1)j+1} = (x+1)+x = 1.$$ Thus Eq. \eqref{Niho1-eq1} has no solution in $\gf_{2^m}$ for any $b\in\gf_{2^n}\backslash\{0,1\}$. Let $U_m = \left\{ x: x\in\gf_{2^{2m}},~x^{2^m+1} = 1 \right\}$. Note that for any $x\in\gf_{2^n}\backslash \gf_{2^m}$, there exists a unique pair $(y,z)\in \gf_{2^m}^{*}\times U_m\backslash\{1\}$ such that $x=yz$. In Eq. \eqref{Niho1-eq1}, we assume that $x=y_1z_1$ and $x+1 = y_2z_2$, where $y_1,y_2\in\gf_{2^m}^{*}$ and $z_1,z_2\in U_m\backslash\{1\}$. Then it is trivial that 
	 \begin{equation}
	 	\label{Niho1-eq2}
	 	y_1z_1+y_2z_2 = 1
	 \end{equation} 
holds and Eq. \eqref{Niho1-eq1} becomes 
 \begin{equation}
 	\label{Niho1-eq3}
 	 y_1z_1^{\frac{-2^{m+1}-1}{3}} + y_2z_2^{\frac{-2^{m+1}-1}{3}} = b.
 \end{equation}
Since $m$ is even, $\gcd(3,2^m+1) = 1$ and thus for any $ z\in U_m \backslash \{1\}$, there exists a unique element $ s \in U_m\backslash\{1\} $ such that $z = s^3$. In Eqs. \eqref{Niho1-eq2} and \eqref{Niho1-eq3},  we assume that $z_1 = s_1^3$ and $z_2 = s_2^3$, where $s_1, s_2\in U_m\backslash\{1\}$. Then Eqs. \eqref{Niho1-eq2} and \eqref{Niho1-eq3} become 
\begin{equation}
	\label{Niho1-eq4}
	y_1s_1^3+y_2s_2^3 + 1 = 0
\end{equation}
and 
\begin{equation}
	\label{Niho1-eq5}
	y_1s_1 + y_2s_2 + b = 0,
\end{equation}
respectively. Computing the summation of the left part of Eq. \eqref{Niho1-eq4} and the left part of Eq. \eqref{Niho1-eq5}  multiplied by $s_1^2$, we get  
\begin{equation}
	\label{Niho1-eq6}
	s_2\left(s_2^2+ s_1^2  \right) y_2 = 1+ bs_1^2.
\end{equation}
If $s_1=s_2$, then $1+bs_1^2=0$ and thus $s_1 = \frac{1}{\sqrt{b}}$. Since $s_1\in U_m\backslash\{1\}$, $s_1=\frac{1}{\sqrt{b}}$ holds only if $b\in U_m\backslash\{1\}.$ Moreover, when $b\in U_m\backslash\{1\}$, $s_1=s_2=\frac{1}{\sqrt{b}}$ and by Eq. \eqref{Niho1-eq5}, we obtain $y_1+y_2 = \frac{b}{s_1}=\sqrt{b}^3\notin\gf_{2^m}$ clearly. Therefore $s_1\neq s_2$ and $1+bs_1^2\neq 0$. Then Eq. \eqref{Niho1-eq6} leads to 
\begin{equation}
	\label{Niho1-eq7}
	y_2= \frac{1+ bs_1^2}{s_2\left(s_2^2+ s_1^2  \right)}. 
\end{equation}
Since $y_2\in\gf_{2^m}^{*}$, by Eq. \eqref{Niho1-eq7}, we have 
$$\frac{1+ bs_1^2}{s_2\left(s_2^2+ s_1^2  \right)} = \frac{1+b^{2^m}s_1^{-2}}{s_2^{-1}(s_1^{-2}+s_2^{-2})} =\frac{s_1^2s_2^3 + b^{2^m}s_2^3}{s_2^2+s_1^2}.$$
After simplifying, we acquire
\begin{equation}
	\label{Niho1-eq10}
	(b+s_2^4)s_1^2 = 1+b^{2^m}s_2^4.
\end{equation}
If $s_2 = \sqrt[4]{b}$, then $1+b^{2^m}s_2^4=0$, which holds only if $b^{2^m+1}=1$. Furthermore, when $b^{2^m+1}=1$, plugging  $s_2 = \sqrt[4]{b}$ and Eq. \eqref{Niho1-eq7} into Eq. \eqref{Niho1-eq5}, we have 
$$y_1 = \frac{b+y_2s_2}{s_1} = \frac{b(s_1^2+\sqrt{b})+1+bs_1^2}{s_1(s_1^2+\sqrt{b})} = \frac{\sqrt{b}^3+1}{s_1(s_1^2+\sqrt{b})}. $$
Since $y_1\in\gf_{2^m}$, $$ \frac{\sqrt{b}^3+1}{s_1(s_1^2+\sqrt{b})} = \frac{\sqrt{b}^{-3}+1}{s_1^{-1}(s_1^{-2}+\sqrt{b}^{-1})}  = \frac{s_1^3(\sqrt{b}^3+1)}{\sqrt{b}^3+b s_1^2},$$
i.e.,
$$s_1^6+\sqrt{b}s_1^4+bs_1^2+\sqrt{b}^3=\left({s_1^2}+\sqrt{b}\right)^3=0.$$
Thus $s_1=\sqrt[4]{b}=s_2$, which contradicts $s_1\neq s_2$. 

Therefore $s_2 \neq \sqrt[4]{b}$  and by Eq. \eqref{Niho1-eq10},
\begin{equation}
	\label{Niho1-eq8}
	s_1^2 = \frac{1+b^{2^m}s_2^4}{b+s_2^4}. 
\end{equation}
Plugging Eq. \eqref{Niho1-eq8} into Eq. \eqref{Niho1-eq7}, we get 
\begin{equation}
	y_2 =  \frac{1+ bs_1^2}{s_2\left(s_2^2+ s_1^2  \right)}  =  \frac{b+s_2^4+b\left(1+b^{2^m}s_2^4\right)}{s_2\left( s_2^2 (b+s_2^4) + 1+b^{2^m} s_2^4  \right)}
	=  \frac{1+b^{2^m+1}}{bs_2^{-1}+b^{2^m}s_2 + s_2^3+s_2^{-3}}.\label{Niho1-eq9} 
\end{equation}
Let $\Gamma = b^2s_2^{-2}+b^{2^{m+1}}s_2^2 + s_2^6+s_2^{-6}$. It is clear that $\Gamma\in\gf_{2^m}^{*}$. Together with Eqs. \eqref{Niho1-eq5}, \eqref{Niho1-eq9} and \eqref{Niho1-eq8}, we arrive at 
\begin{equation}
	\label{Niho-eq11}
		y_1^2 =\frac{b^2 + s_2^2 y_2^2}{s_1^2} 
	= \frac{b^2 + s_2^2 \cdot \frac{1+b^{2^{m+1}+2}}{\Gamma}}{ \frac{1+b^{2^m}s_2^4}{b+s_2^4} } 
	= \frac{b^2(b+s_2^4)\Gamma + s_2^2(b+s_2^4)(1+b^{2^{m+1}+2})}{(1+b^{2^m}s_2^4)\Gamma}.
\end{equation}
Furthermore, the fact $y_1\in\gf_{2^m}$ yields 
\begin{eqnarray*}
	\frac{(b^3+b^2s_2^4)\Gamma + (bs_2^2+s_2^6)(1+b^{2^{m+1}+2})}{(1+b^{2^m}s_2^4)\Gamma} & = & \frac{(b^{3\cdot 2^{m}}+b^{2^{m+1}}s_2^{-4})\Gamma + (b^{2^m}s_2^{-2}+s_2^{-6})(1+b^{2^{m+1}+2})}{(1+bs_2^{-4})\Gamma} \\
	&=& \frac{(b^{3\cdot 2^m}s_2^6+ b^{2^{m+1}} s_2^{2})\Gamma + (b^{2^m}s_2^4+1)(1+b^{2^{m+1}+2})}{(s_2^6+bs_2^{2})\Gamma}, 
\end{eqnarray*}
i.e.,
\begin{eqnarray}
&&(s_2^6+bs_2^{2})(b^3+b^2s_2^4)\Gamma +  (s_2^6+bs_2^{2})^2 (1+b^{2^{m+1}+2}) \notag \\
& =& 
(1+b^{2^m}s_2^4)(b^{3\cdot 2^m}s_2^6+ b^{2^{m+1}} s_2^{2})	\Gamma + (1+b^{2^m}s_2^4)^2 (1+b^{2^{m+1}+2}). \label{Niho1-eq12}
\end{eqnarray}
Since 
\begin{eqnarray*}
	&& (s_2^6+bs_2^{2})(b^3+b^2s_2^4) + (1+b^{2^m}s_2^4)(b^{3\cdot 2^m}s_2^6+ b^{2^{m+1}} s_2^{2}) \\
	&=& b^3s_2^6 +b^2 s_2^{10} + b^4 s_2^2 + b^3s_2^6 + b^{3\cdot 2^m}s_2^6 + b^{2^{m+1}}s_2^2 + b^{2^{m+2}}s_2^{10} + b^{3\cdot 2^m} s_2^6\\
	&=& (b^{2^{m+2}}+b^2) s_2^{10} +(b^{2^{m+1}}+b^4) s_2^2
\end{eqnarray*}
and 
\begin{eqnarray*}
	 (s_2^6+bs_2^{2})^2 + (1+b^{2^m}s_2^4)^2 
	= s_2^{12}+b^2s_2^4 +1 + b^{2^{m+1}} s_2^8 ,
 \end{eqnarray*}
Eq. \eqref{Niho1-eq12} becomes 
$$\left[(b^{2^{m+2}}+b^2) s_2^{10} +(b^{2^{m+1}}+b^4) s_2^2 \right] \Gamma = \left[ s_2^{12}  +b^2s_2^4+1 + b^{2^{m+1}} s_2^8 \right] (1+b^{2^{m+1}+2}),$$
i.e., 
$$ \left[(b^{2^{m+2}}+b^2) s_2^{4} +(b^{2^{m+1}}+b^4) s_2^{-4} \right] \Gamma = \left[ s_2^6 + s_2^{-6} + b^{2^{m+1}} s_2^2 +b^2s_2^{-2} \right] (1+b^{2^{m+1}+2}) = \Gamma (1+b^{2^{m+1}+2}). $$
Hence we have 
$$ (b^{2^{m+2}}+b^2) s_2^{4} +(b^{2^{m+1}}+b^4) s_2^{-4}  = 1+b^{2^{m+1}+2}, $$
which has at most two solutions $s_2 = \hat{s}_2$ or $s_2 = \bar{s}_2$ in $U_m\backslash\{1\}$. Moreover, by Eq. \eqref{Niho1-eq9}, there are at most two possibilities for the value of $y_2$, denoted by $\hat{y}_2$ and $\bar{y}_2$ and thus Eq. \eqref{Niho1-eq1} has at most two solutions $x = \hat{y}_2\hat{s}_2^3 + 1$ or $x = \bar{y}_2\bar{s}_2^3 + 1$ in $\gf_{2^n}$ for  any $b\in\gf_{2^n}\backslash\{0,1\}$. 

All in all, $\mathcal{F}_2$ is locally-APN over $\gf_{2^n}$.
\end{proof}

The following proposition shows that the function in Theorem \ref{Th-Niho} is not APN. 
\begin{Prop}
		Let $n=2m$ with $m$ even and  $j=\frac{2^m+2}{3}$. Then the differential uniformity of $\mathcal{F}_2(x) = x^{(2^m-1)j+1}$ is $2^m$. 
\end{Prop}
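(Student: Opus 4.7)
The plan is to pin down the differential spectrum at the two values $b = 0, 1$ that were excluded from Theorem~\ref{Th-Niho}, since that result already handles $b \in \gf_{2^n} \setminus \{0,1\}$ with at most two solutions. Writing $d = (2^m-1)j + 1$, the observation made at the start of the proof of Theorem~\ref{Th-Niho} that $(x+1)^d + x^d = 1$ for every $x \in \gf_{2^m}$ (equivalently $d \equiv 1 \pmod{2^m-1}$) immediately supplies $2^m$ solutions when $b = 1$, so $\delta_{\mathcal{F}_2} \geq 2^m$. It remains to show this is tight, i.e., that $b = 1$ contributes no further solutions from $\gf_{2^n}\setminus\gf_{2^m}$ and that $b = 0$ contributes none at all.

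For the $b = 1$ count, I would re-run the Niho decomposition from the proof of Theorem~\ref{Th-Niho} with $b = 1$. The sub-case $s_1 = s_2$ there forced $s_1 = 1/\sqrt{b}$, which at $b = 1$ gives $s_1 = 1$, contradicting $s_1 \in U_m\setminus\{1\}$. In the sub-case $s_1 \neq s_2$, equation (\ref{Niho1-eq10}) collapses to $(1 + s_2^4)s_1^2 = 1 + s_2^4$; since $|U_m| = 2^m+1$ is odd and $\gcd(4, 2^m+1) = 1$, the relation $s_2^4 = 1$ would force $s_2 = 1$, which is forbidden, so $s_1^2 = 1$ and hence $s_1 = 1$, again contradicting $s_1 \in U_m\setminus\{1\}$. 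Thus for $b = 1$ the equation has exactly $2^m$ solutions, all of them in $\gf_{2^m}$.

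For $b = 0$, the equation $(x+1)^d = x^d$ can be handled by checking that $x \mapsto x^d$ is a bijection on $\gf_{2^n}$, which forces $x+1 = x$, impossible. Bijectivity follows from $\gcd(d, 2^n-1) = 1$, and since $\gcd(2^m-1, 2^m+1) = 1$, I would verify the two factors separately. From $d \equiv 1 \pmod{2^m-1}$ the first factor is $1$. For the second, note
\[
3d \;=\; 2^{2m}+2^m+1 \;\equiv\; 1 - 1 + 1 \;=\; 1 \pmod{2^m+1}
\]
using $2^m \equiv -1 \pmod{2^m+1}$. Because $m$ is even we have $2^m \equiv 1 \pmod 3$, so $2^m+1 \equiv 2 \pmod 3$ and $\gcd(3, 2^m+1) = 1$. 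Therefore $d$ is a unit modulo $2^m+1$ and $\gcd(d, 2^m+1) = 1$, giving bijectivity.

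Combining the three cases with Theorem~\ref{Th-Niho} yields exactly $\delta_{\mathcal{F}_2} = 2^m$. The proof involves no new ingredients beyond what has already been done in Theorem~\ref{Th-Niho}; the only real point to check carefully is that specialising the long derivation in that proof at $b = 1$ degenerates cleanly (rather than, say, by dividing by a quantity that vanishes), which is the step I would go through most carefully.
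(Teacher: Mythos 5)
Your proposal is correct and follows essentially the same route as the paper: $b=0$ is ruled out by showing $\gcd\bigl((2^m-1)j+1,\,2^n-1\bigr)=1$ (the paper reduces the gcd step by step, you use $3d\equiv 1\pmod{2^m+1}$, which amounts to the same thing), and $b=1$ yields exactly the $2^m$ solutions in $\gf_{2^m}$ because the Niho-style decomposition of Theorem~\ref{Th-Niho} excludes solutions in $\gf_{2^n}\setminus\gf_{2^m}$. The only difference is cosmetic: the paper simply cites the proof of Theorem~\ref{Th-Niho} for the latter point, whereas you explicitly check that the specialization at $b=1$ degenerates cleanly (via Eq.~\eqref{Niho1-eq10} forcing $s_1=1$), which is a welcome extra detail rather than a new idea.
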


\begin{proof}
	We only need to compute the solutions in $\gf_{2^n}$ of the equation 
	\begin{equation}
		\label{DU-Niho-eq1}(x+1)^{(2^m-1)j+1} + x^{(2^m-1)j+1} = b
	\end{equation}
	for $b=0,1$.
	
	 When $b=0$, since 
	 \begin{eqnarray*}
	 &&	\gcd((2^m-1)j+1, 2^{n}-1) = \gcd((2^m-1)j+1,2^m+1) \\
	 &	= &\gcd(2j-1,2^m+1) = \gcd\left(\frac{2^{m+1}+1}{3},2^m+1\right) = \gcd(2^{m+1}+1,2^m+1)=1,
	 \end{eqnarray*}
	 Eq. \eqref{DU-Niho-eq1} becomes $x+1=x$, which means that Eq. \eqref{DU-Niho-eq1} has no solution in $\gf_{2^n}$. 
	 
	 When $b=1$, it is trivial that all elements $x\in\gf_{2^m}$ are the solutions of Eq.  \eqref{DU-Niho-eq1}. In addition, by the proof of Theorem \ref{Th-Niho}, Eq. \eqref{DU-Niho-eq1} has no solution in $\gf_{2^n}\backslash\gf_{2^m}$. Therefore, in this case, the number of solutions in $\gf_{2^n}$ of Eq. \eqref{DU-Niho-eq1}  are $2^m$.
	 
	 Thus the differential uniformity of $\mathcal{F}_2(x) = x^{(2^m-1)j+1}$ is $2^m$. 
\end{proof}

\section{More classes of 0-APN power functions}
\label{0-APN}

In this section, we construct six explicit infinite classes of 0-APN power functions. Before that, 
	{since} the resultant of polynomials will be used in our proof, we now recall some basic facts about it. Given two polynomials $ u(x) = a_mx^m+a_{m-1}x^{m-1}+\cdots+a_0$ and $ v(x) = b_nx^n+b_{n-1}x^{n-1}+\cdots+b_0 $ 
over a {field} $K$ with degrees $m$ and $n$, respectively, their resultant $\mathrm{Res}(u,v)\in {K}$ is the determinant of the following square matrix of order $n+m$:
$$  \begin{pmatrix} 
	a_m & a_{m-1} &  \cdots & a_0  & 0 & & \cdots & 0 \\
	0 & a_m & a_{m-1} & \cdots & a_0 & 0 & \cdots  & 0 \\
	\vdots &  &  \ddots&  &  &  & & \vdots  \\
	0 & \cdots & 0 & a_m & a_{m-1} &  & \cdots & a_0 \\
	b_n & b_{n-1} & \cdots &  & b_0 & 0 &\cdots  & 0\\
	0 & b_{n} & b_{n-1} & \cdots &  & b_0 & \cdots  & 0\\
	\vdots &  & \ddots &  & &   &  \ddots &   \vdots   \\ 
	0   & \cdots & 0 & b_{n} & b_{n-1}&  &   \cdots &  b_0    \\
\end{pmatrix}.
$$
For a field $K$ and two polynomials $F(x,y), G(x,y) \in K[x,y]$, we use $ \mathrm{Res}_y(F,G)$ to denote the resultant  of $F$ and $G$ with respect to $y$, { which is the resultant of $F$ and $G$ when considered as one polynomial in the single variable $y$.} In this case, $ \mathrm{Res}_y(F,G)\in K[x]$ belongs to the ideal generated by $F$ and $G$. {It is known that $F(x,y)=0$ and $G(x,y)=0$ has a common solution $(x, y)$ if and only if
	$x$ is a solution of $\mathrm{Res}_y(F,G)(x)=0$ (see \cite[p. 36]{LN1997})}. 

By the proof of Theorem \ref{localAPN-Th}, we can see that the power function $\mathcal{F}_1$ in Theorem \ref{localAPN-Th} is also 0-APN. In addition to this, we  construct six infinite classes of 0-APN but not APN power functions over $\gf_{2^n}$ in the following. According to the parity of $n$, we divide these results into two theorems. 

\begin{Th}
	\label{Th-0-APN-even}
	Let $n=2m$. Then $F(x)=x^d$ is 0-APN over $\gf_{2^n}$ when one of the following statements holds:
	\begin{enumerate}
		\item $m$ is even with $3\nmid m$, and $d = 2^{2m-1}-2^m-1$;
		\item $m$ is odd and $d = 2^{2m-1}-2^{m-1}-1$;
		\item $m=2k$ with $k$ even, and $d = 2^{3k}-2^{2k}+2^k-1$.
	\end{enumerate}
\end{Th}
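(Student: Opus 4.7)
The plan is to handle all three cases via the Dobbertin-style multivariate method. Recall $F(x)=x^d$ is 0-APN if and only if $\Psi_d(x):=(x+1)^d+x^d+1=0$ has no root in $\gf_{2^n}\setminus\{0,1\}$. For each $d$ I would first locate an additive identity involving $d$ that, upon multiplying $\Psi_d$ by a matching power of $x(x+1)$ and extracting $2^k$-th roots in characteristic $2$, reduces $\Psi_d$ to a low-degree equation. I would then introduce an auxiliary variable tied to $x$ by a Frobenius power so that the remaining equation becomes a bivariate polynomial of bounded degree, and combine with either a Frobenius-conjugate equation (eliminated via Vieta's formulas or resultants) or the realizability criterion of Lemma \ref{quadratic_equation}/Remark \ref{rem_tr_quadratic} to force a contradiction.

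For case (1), I would use $-d\equiv 2^m(2^{m-1}+1)\pmod{2^{2m}-1}$: after multiplying by $[x(x+1)]^{-d}$ and taking a $2^m$-th root, $\Psi_d(x)=0$ rewrites as $y+w=yw$ with $y=x^{2^{m-1}+1}$, $w=(x+1)^{2^{m-1}+1}$. Setting $u=x^{2^{m-1}}$ this becomes the symmetric polynomial equation
\[
\Phi(x,u)\;:=\;xu(x+1)(u+1)+x+u+1\;=\;0.
\]
The Frobenius $\sigma^{m+1}$ sends $(x,u)\mapsto(u^4,x)$, so by the symmetry of $\Phi$ both $u$ and $u^4$ are roots of the quadratic $\Phi(x,\cdot)$ in its second argument. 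Either $u^4=u$, forcing $u\in\gf_4^*$ and hence $x\in\gf_4^*$, which I would rule out by direct substitution when $m$ is even; or Vieta's formulas give $u\cdot u^4=u^5=1/x$, which combined with $u=x^{2^{m-1}}$ (so $u^{2^{m+1}}=x$) yields $u^{2^{m+1}+5}=1$. Factoring the gcd,
\[
\gcd(2^{m+1}+5,\,2^{2m}-1)\;=\;\gcd(2^m-1,\,7)\cdot\gcd(2^m+1,\,3),
\]
the hypotheses $m$ even and $3\nmid m$ make both factors equal to $1$, hence $u=1$ and $x=1$, contradicting $x\in\gf_{2^n}\setminus\{0,1\}$.

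For case (2), I would exploit $d+2^{m-1}+1=2^{2m-1}$: multiplying by $[x(x+1)]^{2^{m-1}+1}$ and taking a $2^{m-1}$-th root reduces $\Psi_d(x)=0$ to $N(x)+N(x+1)=N(x)N(x+1)$, where $N(y)=y^{2^m+1}$ is the relative norm to $\gf_{2^m}$. Writing $A=N(x)$ and $T=x+x^{2^m}$, the identity $N(x+1)=A+T+1$ produces $A^2+A(T+1)+(T+1)=0$ over $\gf_{2^m}$. The subcase $x\in\gf_{2^m}$ forces $x^2+x+1=0$, impossible for $m$ odd since $\gf_4\cap\gf_{2^m}=\gf_2$. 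For $x\in\gf_{2^{2m}}\setminus\gf_{2^m}$, Remark \ref{rem_tr_quadratic} demands $\tr_m(A/T^2)=1$. Solving $T+1=A^2/(A+1)$ and setting $a=\sqrt{A}\in\gf_{2^m}$, one computes $A/T^2=(g+g^2)^2$ with $g=a/(a^2+a+1)$, and a telescoping of the absolute trace then forces $\tr_m(A/T^2)=0$, contradicting the realizability condition.

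For case (3), where $d=(2^k-1)(2^{2k}+1)$ and $m=2k$, the factorization $F(x)=N(x)^{2^k-1}$ immediately rewrites $\Psi_d(x)=0$ as $A^{2^k-1}+B^{2^k-1}=1$ with $A=N(x)$, $B=N(x+1)\in\gf_{2^m}^*$. Raising both sides to the $2^k$-th power and using $A^{2^m}=A$ gives $A^{-(2^k-1)}+B^{-(2^k-1)}=1$; setting $\alpha=A^{2^k-1}$ and $\beta=B^{2^k-1}$, Vieta's formulas force $\alpha+\beta=\alpha\beta=1$, so $\alpha$ satisfies $\alpha^2+\alpha+1=0$ and has multiplicative order $3$. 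But $\alpha^{2^k+1}=A^{2^m-1}=1$, so $3\mid 2^k+1$, i.e.\ $k$ must be odd; since $k$ is even by hypothesis, contradiction. I expect the main obstacle to be the trace identity $A/T^2=(g+g^2)^2$ in case (2): without it the reduced quadratic $A^2+A(T+1)+(T+1)=0$ admits a one-parameter family of realizable $(A,T)\in\gf_{2^m}^2$, and since everything already lives in the subfield $\gf_{2^m}$ no gcd argument on $2^{2m}-1$ is available to eliminate them.
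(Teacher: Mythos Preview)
Your arguments are correct in all three cases, and in cases (1) and (3) they differ substantially from the paper's.

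For case (1) the paper doubles the exponent, sets $y=x^{2^m}$ to obtain two bivariate polynomials $F_1,F_2\in\gf_2[x,y]$, and computes $\mathrm{Res}_y(F_1,F_2)$ with MAGMA; the resultant factors over $\gf_2$ into pieces of degree at most $3$, and the hypotheses $2\mid m$, $3\nmid m$ rule each factor out. Your route via $\Phi(x,u)=xu(x+1)(u+1)+x+u+1$ and the Frobenius symmetry $(x,u)\mapsto(u^4,x)$ is more conceptual: it replaces the machine resultant by Vieta's formula for the quadratic $\Phi(x,\cdot)$ and the single gcd computation $\gcd(2^{m+1}+5,2^{2m}-1)=\gcd(7,2^m-1)\gcd(3,2^m+1)$, which makes the role of the two hypotheses transparent.

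For case (3) the paper again passes to $\alpha=x+x^{2^m}$, $\beta=x^{2^m+1}$, derives a relation between $\alpha,\beta,\alpha^{2^k},\beta^{2^k}$, eliminates $\alpha^{2^k}$ by a linear combination of the equation with its $2^k$-th power, and splits into the subcases $\beta=\alpha+1$ and $\beta^{2^{k+1}}+\beta^2+\beta^{2^k+1}=0$, each disposed of separately. Your observation $d=(2^k-1)(2^{2k}+1)$, so that $x^d=N(x)^{2^k-1}$ with $N$ the relative norm, short-circuits all of this: the system $\alpha+\beta=1$, $\alpha^{-1}+\beta^{-1}=1$ forces $\alpha\in\gf_4\setminus\gf_2$, and $\alpha^{2^k+1}=A^{2^m-1}=1$ then contradicts $k$ even in one line.

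Case (2) is essentially the paper's argument: both reduce to $A^2+A(T+1)+(T+1)=0$ with $A=x^{2^m+1}$, $T=x+x^{2^m}$, and finish by showing $\tr_m(A/T^2)=0$. The paper divides the relation by $T^2$ and telescopes directly; your detour through $A/T^2=(g+g^2)^2$ with $g=\sqrt{A}/(A+\sqrt{A}+1)$ is a valid alternative. One phrasing issue: the passage ``multiplying by $[x(x+1)]^{2^{m-1}+1}$ and taking a $2^{m-1}$-th root'' does not literally yield the norm equation; the clean route is to note $2d\equiv-(2^m+1)\pmod{2^{2m}-1}$, square $\Psi_d(x)=0$, and clear denominators. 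This is cosmetic and does not affect the argument.
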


\begin{proof}
It suffices to show that the equation 
	\begin{equation}
		\label{0-APN-even-eq1}
		(x+1)^d+x^d+1=0
	\end{equation}
has no solution in $\gf_{2^n}\backslash\{0,1\}$. 

1) When $d=2^{2m-1}-2^{m}-1$, let $d^{'} = 2d = 2^{2m}-2^{m+1}-2$. Then Eq. \eqref{0-APN-even-eq1} is equivalent to 
$$(x+1)^{d^{'}} + x^{d^{'}} + 1 =0,$$
i.e., 
\begin{equation}
	\label{0-APN-even-eq2}
	(x+1)^{-2^{m+1}-1}+x^{-2^{m+1}-1} +1 =0.
\end{equation}
Since we only consider the solutions to Eq. \eqref{0-APN-even-eq2} in $\gf_{2^n}\backslash\{0,1\}$, Eq. \eqref{0-APN-even-eq2} can be written as 
$$\frac{1}{(x+1)^{2^{m+1}+1}} + \frac{1}{x^{2^{m+1}+1}}+1=0,$$
namely,
\begin{equation}
	\label{0-APN-even-eq3}
	x^{2^{m+2}+2}+x^{2^{m+2}+1}+x^{2^{m+1}+2}+x^{2^{m+1}+1}+x^{2^{m+1}}+x+1 = 0.
\end{equation}
Let $y=x^{2^m}$. Then $y^{2^m}=x^{2^{2m}}=x$ and Eq. \eqref{0-APN-even-eq3} becomes 
\begin{equation}
	\label{0-APN-even-eq4}
	x^2y^4+xy^4+x^2y^2+xy^2+y^2+x+1=0.
\end{equation}
Raising Eq. \eqref{0-APN-even-eq4} into its $2^m$-th power, we get 
\begin{equation}
	\label{0-APN-even-eq5}
	y^2x^4+yx^4+y^2x^2+yx^2+x^2+y+1=0.
\end{equation}
Let polynomials $F_1(x,y), F_2(x,y)\in \gf_2[x,y]$ be defined by
$$F_1(x,y) = x^2y^4+xy^4+x^2y^2+xy^2+y^2+x+1$$
and 
$$F_2(x,y)=	y^2x^4+yx^4+y^2x^2+yx^2+x^2+y+1, $$
respectively. 	With the help of MAGMA, we obtain the resultant of $F_1$ and $F_2$ with respect to $y$ as follows
\begin{equation}
	\label{0-APN-even-eq6}
	\mathrm{Res}_y(F_1,F_2)(x) = x^2(x+1)^2(x^2+x+1)^3(x^3+x+1)(x^3+x^2+1). 
\end{equation}
 Together with Eqs. \eqref{0-APN-even-eq4}, \eqref{0-APN-even-eq5} and \eqref{0-APN-even-eq6}, we have 
$$x^2(x+1)^2(x^2+x+1)^3(x^3+x+1)(x^3+x^2+1) = 0.$$
Then $x^2+x+1=0$ or $x^3+x+1=0$ or $x^3+x^2+1=0$. If  $x^2+x+1=0$, then $x\in\gf_{2^2}\backslash \{0,1\}$ and $(x+1)^{d^{'}}+x^{d^{'}} = 0$ since $3 \mid (2^{m+1}+1)$ due to $m$ even. Thus any element $x\in\gf_{2^2}\backslash\{0,1\}$ is not the solution to Eq. \eqref{0-APN-even-eq2}. If $x^3+x+1=0$ or $x^3+x^2+1=0$, then $3\mid 2m$, which contradicts the condition $3\nmid m$. 

Therefore, Eq. \eqref{0-APN-even-eq2} exactly has no solution in $\gf_{2^n}\backslash\{0,1\}$. 

2) When $d=2^{2m-1}-2^{m-1}-1$, let $d^{'} = 2d = 2^{2m}-2^{m}-2$. Then Eq. \eqref{0-APN-even-eq1} is equivalent to 
$$(x+1)^{d^{'}} + x^{d^{'}} + 1 =0,$$
i.e., 
\begin{equation}
	\label{0-APN-even-eq7}
	(x+1)^{-2^{m}-1}+x^{-2^{m}-1} +1 =0.
\end{equation}
Since we only consider the solutions to Eq. \eqref{0-APN-even-eq7} in $\gf_{2^n}\backslash\{0,1\}$, Eq. \eqref{0-APN-even-eq7} can be written as 
$$\frac{1}{(x+1)^{2^{m}+1}} + \frac{1}{x^{2^{m}+1}}+1=0,$$
namely,
\begin{equation}
	\label{0-APN-even-eq8}
	x^{2^{m+1}+2}+x^{2^{m+1}+1}+x^{2^{m}+2}+x^{2^{m}+1}+x^{2^{m}}+ x +1 = 0.
\end{equation}

If $x\in\gf_{2^m}\backslash\{0,1\},$ then Eq. \eqref{0-APN-even-eq8} becomes $x^4+x^2+1=0$, i.e., $x\in\gf_{2^2}\backslash \{0,1\}$. However, in this case, $(x+1)^{d^{'}}+x^{d^{'}} = 0$ since $3 \mid (2^{m}+1)$ due to $m$ odd. Thus any element $x\in\gf_{2^2}\backslash\{0,1\}$ is not the solution to Eq. \eqref{0-APN-even-eq7}.

If $x\in\gf_{2^n}\backslash\gf_{2^m}$, let $x+x^{2^m}=\alpha$ and $x^{2^m+1}=\beta$. Then by Remark \ref{rem_tr_quadratic}, $\alpha,\beta\in\gf_{2^m}^{*}$ and $\tr_m\left(\frac{\beta}{\alpha^2}\right)=1$. Moreover, Eq. \eqref{0-APN-even-eq8} can be written as 
$$\beta^2+\alpha\beta + \beta + \alpha +1 = 0,$$
i.e.,
$$\frac{\beta^2}{\alpha^2} + \frac{\beta}{\alpha}+\frac{\beta}{\alpha^2} + \frac{1}{\alpha} + \frac{1}{\alpha^2} = 0.$$
Therefore, we have 
$$0=\tr_m\left( \frac{\beta^2}{\alpha^2} + \frac{\beta}{\alpha}+\frac{\beta}{\alpha^2} + \frac{1}{\alpha} + \frac{1}{\alpha^2} \right) = \tr_m\left(\frac{\beta}{\alpha^2}\right) = 1,$$
which is wrong. Thus Eq. \eqref{0-APN-even-eq7} has no solution in  $x\in\gf_{2^n}\backslash\gf_{2^m}$. 

All in all, Eq. \eqref{0-APN-even-eq7} has no solution in $\gf_{2^n}\backslash\{0,1\}$. 

3) When $d=2^{3k}-2^{2k}+2^k-1$, since we only consider the solutions in $\gf_{2^n}\backslash\{0,1\}$, Eq. \eqref{0-APN-even-eq1} is equivalent to
$$	\frac{(x+1)^{2^{3k}+2^{k}}}{(x+1)^{2^{2k}+1}} + \frac{x^{2^{3k}+2^{k}}}{x^{2^{2k}+1}} + 1 =0,$$
i.e., 
\begin{equation}
	\label{0-APN-even-eq9}
(x+1)^{2^{3k}+2^{k}}x^{2^{2k}+1}+(x+1)^{2^{2k}+1}x^{2^{3k}+2^{k}}+(x+1)^{2^{2k}+1}x^{2^{2k}+1}=0.
\end{equation}

If $x\in\gf_{2^m}\backslash\{0,1\}$, then Eq. \eqref{0-APN-even-eq9} becomes 
$$ (x+1)^{2^{k+1}} x^2 + (x+1)^2x^{2^{k+1}} + (x+1)^2x^2 =0, $$
namely,
$$x^{2^{k+1}}+x^4=0.$$
Thus $x^{2^{k-1}}=x$. Since $\gcd(k-1,2k) = 1$ due to $k$ even, Eq. \eqref{0-APN-even-eq9} has no solution in $\gf_{2^m}\backslash\{0,1\}.$

If $x\in\gf_{2^n}\backslash\gf_{2^m}$, let $x+x^{2^m} = \alpha$ and $x^{2^m+1} = \beta$. Then $\alpha,\beta\in\gf_{2^m}^{*}$ and $\tr_m\left(\frac{\beta}{\alpha^2}\right)=1$. Moreover, $$(x+1)^{2^m+1} = x^{2^m+1}+x^{2^m}+x+1 = \beta + \alpha +1$$
and then Eq. \eqref{0-APN-even-eq9} becomes 
$$\left( \beta^{2^k} + \alpha^{2^k} + 1 \right) \beta + (\beta + \alpha +1) \beta^{2^k} + (\beta + \alpha +1) \beta = 0, $$
i.e.,
\begin{equation}
	\label{0-APN-even-eq10}
	\alpha^{2^k}\beta + \alpha \beta^{2^k} + \beta^{2^k} + \beta^2 + \alpha \beta =0.
\end{equation}
If $\beta\in\gf_{2^k}^{*}$, then $x^{2^m+1}\in\gf_{2^k}^{*}$. That is to say $(x^{2^{2k}+1})^{2^k}=x^{2^{2k}+1}$ and then $x^{2^{3k}-2^{2k}+2^k-1}= x^d = 1$. Thus $(x+1)^d = 0$, which is impossible. Thus $\beta \in \gf_{2^m}\backslash\gf_{2^k}$. 

Now raising Eq. \eqref{0-APN-even-eq10} into its $2^k$-th power, we get 
\begin{equation}
	\label{0-APN-even-eq11}
	\alpha\beta^{2^k} + \alpha^{2^k} \beta + \beta + \beta^{2^{k+1}} + \alpha^{2^k} \beta^{2^k} =0.
\end{equation}
To eliminate $\alpha^{2^k}$, we compute the summation of the left part of Eq. \eqref{0-APN-even-eq10} multiplied by $(\beta+\beta^{2^k})$ and the multiplication of the left part of Eq. \eqref{0-APN-even-eq11} and $\beta$, getting 
$$\alpha \left( \beta^{2^{k+1}} +\beta^2 + \beta^{2^k+1} \right) +\left(\beta^{2^k} + \beta^2\right)\left(\beta^{2^k}+\beta\right) + \beta^{2^{k+1}+1} + \beta^2 = 0. $$
After further simplifying, we have 
$$(\alpha + \beta +1) \left(\beta^{2^{k+1}} +\beta^2 + \beta^{2^k+1}\right) = 0.$$
Then $\beta = \alpha+1$ or $\beta^{2^{k+1}} +\beta^2 + \beta^{2^k+1} = 0$. If $\beta = \alpha+1$, then $$\tr_m\left( \frac{\beta}{\alpha^2} \right) = \tr_m\left(\frac{\alpha+1}{\alpha^2}\right) = \tr_m\left(\frac{1}{\alpha}+\frac{1}{\alpha^2}\right) = 0,$$ which contradicts $\tr_m\left( \frac{\beta}{\alpha^2} \right)=1$. If $\beta^{2^{k+1}} +\beta^2 + \beta^{2^k+1} = 0$, then $\beta^{2^{k+1}-2} + \beta^{2^k-1} + 1=0$. Thus $\beta^{3(2^k-1)} = 1$. Since $\gcd(3(2^k-1), 2^{2k}-1) = (2^k-1) \gcd(3,2^k+1) = 2^k-1$ due to $k$ even,  we have $\beta\in\gf_{2^k}$, which is also impossible. 

Therefore,  Eq. \eqref{0-APN-even-eq9} has no solution in $\gf_{2^n}\backslash\{0,1\}$. 
\end{proof}

\begin{Th}
	\label{Th-0-APN-odd}
	Let $n=2m+1$. Then $F(x)=x^d$ is 0-APN over $\gf_{2^n}$ when one of the following statements holds:
	\begin{enumerate}
		\item $m\not\equiv1\pmod 3$ and $d=2^{2m}-2^m-1$;
		\item $d=2^{2m-1}-2^{m-1}-1$;
		\item $d=2^{2m-1}-2^m-1$;
	\end{enumerate}
\end{Th}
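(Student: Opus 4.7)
The strategy mirrors the proof of Theorem~\ref{Th-0-APN-even}: for each of the three cases we show that $(x+1)^d+x^d+1=0$ has no solution in $\gf_{2^n}\setminus\{0,1\}$. Since squaring is a bijection on $\gf_{2^n}$, we may freely replace the equation by its $2^k$-th power. For each case we pick $k$ so that $2^kd\equiv-(2^a+b)\pmod{2^n-1}$ for small $a,b$, which, after clearing denominators (valid since $x\notin\{0,1\}$), reduces the task to a polynomial identity of moderate degree; substituting $y=x^{2^m}$ then turns it into an equation $F_1(x,y)=0$ in $\gf_2[x,y]$. The crucial identity specific to odd $n=2m+1$ is $y^{2^{m+1}}=x^{2^n}=x$, so that raising $F_1$ to the $2^{m+1}$-th power produces a second equation $F_2(x,y)=0$ in the same ring.

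For Case~(1), $2d\equiv-(2^{m+1}+1)\pmod{2^n-1}$, and the above procedure yields
\[
F_1 := x^2y^4+xy^4+x^2y^2+xy^2+y^2+x+1,\qquad F_2:=F_1(y^2,x).
\]
Viewing both as quadratics in $Y=y^2$ over $\gf_2[x]$, a short calculation gives
\[
F_2+xF_1=(x+1)(Y+1)(x^3Y+1).
\]
Thus either $x=1$ (excluded), $Y=1$ (forcing $y=1$, hence $x=1$, excluded), or $Y=1/x^3$. Substituting $Y=1/x^3$ back into $F_1$ and clearing denominators gives $x(x^7+1)/(x+1)=0$, so $x$ is a primitive seventh root of unity, i.e.\ $x\in\gf_{2^3}\setminus\gf_2$; simultaneously, $Y=1/x^3$ forces $x^{2^{m+1}+3}=1$. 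Both $x^7=1$ and the ambient requirement $\gf_{2^3}\subset\gf_{2^n}$ (i.e.\ $3\mid 2m+1$) together with $7\mid 2^{m+1}+3$ reduce to $m\equiv 1\pmod 3$. Under the hypothesis $m\not\equiv 1\pmod 3$, no such $x$ exists, completing Case~(1).

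For Cases~(2) and~(3), the same scheme applies with $4d\equiv-(2^{m+1}+3)$ and $4d\equiv-(2^{m+2}+3)\pmod{2^n-1}$, respectively. Using the identities $x^3+(x+1)^3=x^2+x+1$ and $(x^2+x+1)+x^3(x+1)^3=(x^3+x+1)(x^3+x^2+1)$, both cases collapse to the \emph{same} quadratic form
\[
A\,y^{2^{c+1}}+B\,y^{2^c}+C=0,\quad A=x^3(x+1)^3,\ B=(x^3+x+1)(x^3+x^2+1),\ C=(x+1)^3,
\]
with $c=1$ for Case~(2) and $c=2$ for Case~(3). Raising to $2^{m+1}$ and invoking $y^{2^{m+1}}=x$ yields a second equation, and the resultant $\mathrm{Res}_y(F_1,F_2)\in\gf_2[x]$ (computed with MAGMA) factors into a product of low-degree irreducibles. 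Each factor is then ruled out: either its roots lie in $\{0,1\}$, or they lie in a subfield $\gf_{2^k}$ with $\gcd(k,2m+1)=1$ (so that $\gf_{2^k}\cap\gf_{2^n}=\gf_2$), or they force a trace contradiction analogous to the one used in Case~(2) of Theorem~\ref{Th-0-APN-even}.

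The main obstacle will be the resultant computations and the factor-by-factor analysis for Cases~(2) and~(3). Unlike Case~(1), where the linear combination $F_2+xF_1$ factors cleanly into a product of three linear (in $Y$) pieces, Cases~(2) and~(3) yield higher-degree polynomials in $y$ and more factors in the resultant; the chief difficulty is to verify that none of these factors admits a root in $\gf_{2^n}\setminus\{0,1\}$ for arbitrary $m$, since the theorem imposes no extra arithmetic condition on $m$ in these two cases.
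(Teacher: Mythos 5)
Your Case (1) is correct and complete, and it takes a slightly more self-contained route than the paper: after the same reduction $2d\equiv-(2^{m+1}+1)\pmod{2^n-1}$ and the same cleared equation, the paper substitutes $y=x^{2^{m+1}}$ and invokes a MAGMA resultant $\mathrm{Res}_y(F_1,F_2)(x)=x^2(x+1)^2(x^3+x+1)(x^3+x^2+1)$, then excludes the cubic factors because $3\nmid n$; you instead work with $y=x^{2^m}$ and eliminate by hand via $F_2+xF_1=(x+1)(y^2+1)(x^3y^2+1)$, which I have checked, and back-substitution of $y^2=x^{-3}$ gives $x(x^6+x^5+\cdots+1)=0$, forcing $x\in\gf_{2^3}\setminus\gf_2$, impossible when $m\not\equiv1\pmod 3$. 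Both arguments land in the same place (a would-be solution must lie in the cubic subfield); yours avoids the black-box resultant, which is a small gain.

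For Cases (2) and (3), however, there is a genuine gap. Your reduction to $A\,y^{2^{c+1}}+B\,y^{2^c}+C=0$ with $A=x^3(x+1)^3$, $B=(x^3+x+1)(x^3+x^2+1)$, $C=(x+1)^3$ (with $c=1$, resp.\ $c=2$) is correct, and it is a nice structural observation that explains why the paper treats Case (3) as ``very similar'' to Case (2). But the proof stops exactly where it has to be finished: you neither compute the resultant nor show that its factors have no roots in $\gf_{2^n}\setminus\{0,1\}$, and you even leave open whether extra arithmetic conditions on $m$ might be needed. They are not, and no trace argument is needed either: in the paper, for Case (2) the resultant (with its choice $y=x^{2^{m+1}}$, $y^{2^{m+1}}=x^2$) is $x^4(x+1)^4(x^2+x+1)^4(x^4+x+1)^2(x^4+x^3+1)^2(x^4+x^3+x^2+x+1)^2$, and every irreducible factor other than $x$ and $x+1$ has degree $2$ or $4$, so its roots lie in $\gf_{2^2}\cup\gf_{2^4}$; since $n=2m+1$ is odd, $\gf_{2^n}\cap\gf_{2^4}=\gf_2$, so there is no solution outside $\{0,1\}$ for any $m$. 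Case (3) is handled the same way (the paper omits the details). So what is missing from your proposal is not a new idea but the decisive finite computation together with this parity-of-$n$ subfield argument; as written, Cases (2) and (3) are a plan rather than a proof.
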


\begin{proof}
It suffices to show that the equation 
\begin{equation}
	\label{0-APN-odd-eq1}
	(x+1)^d+x^d+1=0
\end{equation}
has no solution in $\gf_{2^n}\backslash\{0,1\}$. 
	
1) 	When $d=2^{2m}-2^m-1$, let $d^{'} = 2d = 2^{2m+1}-2^{m+1}-2$. Then Eq. \eqref{0-APN-odd-eq1} is equivalent to 
$$(x+1)^{d^{'}} + x^{d^{'}} + 1 =0,$$
i.e., 
\begin{equation}
	\label{0-APN-odd-eq2}
	(x+1)^{-2^{m+1}-1}+x^{-2^{m+1}-1} +1 =0.
\end{equation}
Since we only consider the solutions to Eq. \eqref{0-APN-odd-eq2} in $\gf_{2^n}\backslash\{0,1\}$, Eq. \eqref{0-APN-odd-eq2} can be written as 
$$\frac{1}{(x+1)^{2^{m+1}+1}} + \frac{1}{x^{2^{m+1}+1}}+1=0,$$
namely,
\begin{equation}
	\label{0-APN-odd-eq3}
	x^{2^{m+2}+2}+x^{2^{m+2}+1}+x^{2^{m+1}+2}+x^{2^{m+1}+1}+x^{2^{m+1}}+x+1 = 0.
\end{equation}
Let $y=x^{2^{m+1}}$. Then $y^{2^{m+1}} = x^2$ and Eq. \eqref{0-APN-odd-eq3} becomes 
\begin{equation}
	\label{0-APN-odd-eq4}
	x^2y^2+xy^2+x^2y+xy+y+x+1=0.
\end{equation}
Raising Eq. \eqref{0-APN-odd-eq4} into its $2^{m+1}$-th power, we get 
\begin{equation}
	\label{0-APN-odd-eq5}
	y^2x^4+yx^4+y^2x^2+yx^2+x^2+y+1=0.
\end{equation}
Let  $F_1(x,y), F_2(x,y)\in \gf_2[x,y]$ be polynomials defined by
$$F_1(x,y) = x^2y^2+xy^2+x^2y+xy+y+x+1$$
and 
$$F_2(x,y)=	y^2x^4+yx^4+y^2x^2+yx^2+x^2+y+1, $$
respectively. 	With the help of MAGMA, we obtain the resultant of $F_1$ and $F_2$ with respect to $y$ as follows
\begin{equation}
	\label{0-APN-odd-eq6}
	\mathrm{Res}_y(F_1,F_2)(x) = x^2(x+1)^2(x^3+x+1)(x^3+x^2+1). 
\end{equation}
 Together with Eqs. \eqref{0-APN-odd-eq4}, \eqref{0-APN-odd-eq5} and \eqref{0-APN-odd-eq6}, we have 
$$x^2(x+1)^2(x^3+x+1)(x^3+x^2+1) = 0,$$
which does not hold for any $x\in\gf_{2^n}\backslash\{0,1\}$ since   $3\nmid n$ due to  $m\not\equiv1\pmod 3$. 

Therefore, Eq. \eqref{0-APN-odd-eq1} has no solution in $\gf_{2^n}\backslash\{0,1\}$.

2) When $d=2^{2m-1}-2^{m-1}-1$, let $d^{'} = 4d = 2^{2m+1} - 2^{m+1} - 4$. Then Eq. \eqref{0-APN-odd-eq1}  is equivalent to 
$$	(x+1)^{- 2^{m+1} - 3} + x^{- 2^{m+1} - 3} +1 = 0,$$
i.e., 
\begin{equation}
	\label{0-APN-odd-eq7}
\frac{1}{(x+1)^{2^{m+1} + 3}} + \frac{1}{x^{2^{m+1} + 3}} + 1 = 0
\end{equation}
since we only consider the solutions in $\gf_{2^n}\backslash\{0,1\}$. After further simplifying, Eq. \eqref{0-APN-odd-eq7} is actually
$$(x+1)^{2^{m+1} + 3}+x^{2^{m+1} + 3}+(x^2+x)^{2^{m+1} + 3}=0.$$ 
Let $y=x^{2^{m+1}}$. Then $y^{2^{m+1}} = x^2$ and the above equation becomes 
\begin{equation}
	\label{0-APN-odd-eq8}
	x^3y+(y+1)(x+1)^3+(y^2+y)(x^2+x)^3=0.
\end{equation}
Raising Eq. \eqref{0-APN-odd-eq8} into its $2^{m+1}$-th power, we get 
\begin{equation}
		\label{0-APN-odd-eq9}
		y^3x^2+(x^2+1)(y+1)^3+(x^4+x^2)(y^2+y)^3 = 0.
\end{equation}
Let  $F_1(x,y), F_2(x,y)\in \gf_2[x,y]$ be polynomials defined by
$$F_1(x,y) = x^3y+(y+1)(x+1)^3+(y^2+y)(x^2+x)^3$$
and 
$$F_2(x,y)=	y^3x^2+(x^2+1)(y+1)^3+(x^4+x^2)(y^2+y)^3, $$
respectively. 	With the help of MAGMA, we obtain the resultant of $F_1$ and $F_2$ with respect to $y$ as follows
\begin{equation}
	\label{0-APN-odd-eq10}
	\mathrm{Res}_y(F_1,F_2)(x) = x^4(x+1)^4(x^2+x+1)^4(x^4+x+1)^2(x^4+x^3+1)^2(x^4+x^3+x^2+x+1)^2. 
\end{equation}
 Together with Eqs. \eqref{0-APN-odd-eq8}, \eqref{0-APN-odd-eq9} and \eqref{0-APN-odd-eq10}, we have
 $$x^4(x+1)^4(x^2+x+1)^4(x^4+x+1)^2(x^4+x^3+1)^2(x^4+x^3+x^2+x+1)^2 = 0.$$
 Since $n$ is odd, for any $x\in\gf_{2^n}$,  we have $$x^2+x+1\neq0, x^4+x+1 \neq 0, x^4+x^3+1 \neq 0, x^4+x^3+x^2+x+1\neq 0. $$ 
 Therefore, Eq. \eqref{0-APN-odd-eq7} has no solution in $\gf_{2^n}\backslash\{0,1\}$ and then $F$ is 0-APN over $\gf_{2^n}$. 
 
 3) The proof is very similar to that of 2) and we omit it here. 
\end{proof}

\section{Conclusion}
\label{Conclusion}
In this paper, using some basic properties of the  Dickson polynomial of the first kind and  the multivariate method, we obtained two infinite classes of locally-APN but not APN power functions over $\gf_{2^{2m}}$ with $m$ even, i.e., $\mathcal{F}_1(x) = x^{j(2^m-1)}$ with $\gcd(j,2^m+1)=1$ and $\mathcal{F}_2(x) = x^{j(2^m-1)+1}$ with $j=\frac{2^m+2}{3}$, and several infinite classes of 0-APN but not APN power functions. Particularly, the family $\mathcal{F}_1$ is also with the optimal boomerang uniformity $2$ and has an interesting property that its differential uniformity is strictly greater than its boomerang uniformity. 

\section*{Acknowledge}
During the review process, we found that Hu et al. \cite{hu2022differential} and Xie et al. \cite{xie2022niho} also studied independently the locally-APN power functions and some results similar to that of Theorems 2.4 and 2.6 were obtained, respectively. Note that  their techniques in the proofs are different from ours. 
%


\bibliographystyle{plain}
\bibliography{ref}

\end{document}